\newcommand{\defref}[1]{Definition \ref{#1}}
\newcommand{\figref}[1]{Figure \ref{#1}}
\newcommand{\tabref}[1]{Table \ref{#1}}
\newcommand{\secref}[1]{Section \ref{#1}}
\newtheorem{theorem}{\bf Theorem}
\newtheorem{definition}{\bf Definition}
\begin{document}

\title{\huge Data Availability and Decentralization: New Techniques for zk-Rollups in Layer 2 Blockchain Networks}


\author{
	\IEEEauthorblockN{
		Chengpeng Huang\IEEEauthorrefmark{1}, 
		Rui Song\IEEEauthorrefmark{1}, 
		Shang Gao\IEEEauthorrefmark{1}, 
		Yu Guo\IEEEauthorrefmark{2}, 
		Bin Xiao\IEEEauthorrefmark{1}} 
	\IEEEauthorblockA{\IEEEauthorrefmark{1}Hong Kong Polytechnic University\\ Email: chehuang@polyu.edu.hk, csrsong@comp.polyu.edu.hk,  shanggao@comp.polyu.edu.hk, csbxiao@comp.polyu.edu.hk} 
        \IEEEauthorblockA{\IEEEauthorrefmark{2}SECBIT Labs\\ Email: yu.guo@secbit.io}   	 	
}

\maketitle

\begin{abstract}
The scalability limitations of public blockchains have hindered their widespread adoption in real-world applications. While the Ethereum community is pushing forward in zk-rollup (zero-knowledge rollup) solutions, such as introducing the ``blob transaction'' in EIP-4844, Layer 2 networks encounter a data availability problem: storing transactions completely off-chain poses a risk of data loss, particularly when Layer 2 nodes are untrusted. Additionally, building Layer 2 blocks requires significant computational power, compromising the decentralization aspect of Layer 2 networks.

This paper introduces new techniques to address the data availability and decentralization challenges in Layer 2 networks. To ensure data availability, we introduce the concept of ``proof of download'', which ensures that Layer 2 nodes cannot aggregate transactions without downloading historical data. Additionally, we design a ``proof of storage'' scheme that punishes nodes who maliciously delete historical data. For decentralization, we introduce a new role separation for Layer 2, allowing nodes with limited hardware to participate. To further avoid collusion among Layer 2 nodes, we design a ``proof of luck'' scheme, which also provides robust protection against maximal extractable value (MEV) attacks. Experimental results show our techniques not only ensure data availability but also improve overall network efficiency, which implies the practicality and potential of our techniques for real-world implementation.

\end{abstract}

\section{Introduction}
The popularity of public blockchains has surged in recent decades. However, with the increasing of transaction volumes, scalability has emerged as a critical challenge for public blockchains, limiting the potential blockchain applications \cite{karame2016security}. For example, Bitcoin \cite{nakamoto2008bitcoin} processes only $4.6$ transactions per second (TPS), and Ethereum \cite{wood2014ethereum} manages $14.3$ TPS. In contrast, traditional electronic payment systems like Visa support up to $47000$ TPS.

To address the scalability issue, zk-rollups have been proposed, allowing for frequent off-chain (Layer 2, L2) transactions and only recording aggregated compressed data on the public blockchain (Layer 1, L1). They employ zk-SNARKs (zero-knowledge succinct non-interactive argument of knowledge) to ensure the validity of aggregated compressed transactions \cite{gramoli2015rollup, buterin2018chain, Floersch2019Ethereum, dYdX}. Currently, the block size of the underlying L1 blockchain is approximately 170KB \cite{Average_Block_Size}, and zk-rollups can increase throughput to 2000 TPS \cite{HybridLayer2}. To further improve the TPS of zk-rollups, the Ethereum community is exploring new techniques. For instance, EIP-4844 (Proto-Danksharding) \cite{EIP4844}, proposes a temporary storage design known as ``blob'' transactions in Ethereum blocks. Blob transactions can provide an additional 2MB storage for zk-rollups, and the advanced version, Danksharding \cite{EthereumDanksharding}, supports $32$ MB storage. With the blob transactions, zk-rollups can support much higher TPS. 

Although zk-rollups are promising with Danksharding, they introduce their own challenges. The first issue is known as the \emph{data availability}, which requires all transactions to be accessible. Danksharding incurs two new problems that affect data availability. First, L1 nodes cannot fully ensure the data availability of blob due to the \emph{lazy validator problem}. Second, historical transactions will be lost permanently once temporary blobs are deleted (\emph{historical data problem}). Although L2 nodes can store complete transaction data \cite{Validium, Volition, zkPorter}, data off-chain solutions incur some other data availability issues, such as the \textit{data withholding attacks} \cite{Data_Withholding_Attacks} hindering the withdraws from users.\footnote{Precisely, solutions that store data completely off-chain are sidechain approaches. In this paper, we regard them as a variant of zk-rollups.}

The second challenge is \emph{decentralization} \cite{chu2018curses}. Since existing zk-rollups use zero-knowledge proofs to show that aggregated compressed transactions are generated properly, efficiently generating zk-SNARK proofs places prohibitively high hardware demands on L2 nodes. Additionally, zk-rollups need to preserve historical transaction data. Thus, L2 nodes require more bandwidth and storage to download and preserve historical transactions. As a result, the L2 network becomes less decentralized due to the high requirement of hardware, which leads to subsequent problems like potential maximal extractable value (MEV) attacks \cite{zhou2021just,qin2022quantifying}.

In this paper, we present a set of novel techniques to enhance data availability and decentralization of L2 networks. To address data availability, we first propose a ``proof of download'' technique where L2 nodes are required to download transaction data and update states before generating the next L2 block. Additionally, we introduce ``proof of existence'' to ensure historical transactions exist in L2. Compared with the data storage solutions such as proof of retrievability (PoR) \cite{anthoine2021dynamic}, proof of replication (PoRep) \cite{fisch2018poreps}, and proof of storage-time (PoSt) \cite{ateniese2020proof}, our ``proof of existence'' focuses on data availability instead of storage and is more efficient and suitable for L2. By further leveraging parallel processing and partial storage, these techniques improve data availability while alleviating the bandwidth and storage burdens on L2 nodes, which also partially address the decentralization issue.

To address the decentralization challenge, we design a novel ``role separation'' scheme for L2 nodes, which enables nodes with limited hardware resources to participate in L2 networks and perform the task of selecting preferred transactions. Furthermore, we propose an innovative ``proof of luck'' technique to prevent collusion among L2 nodes. As an additional benefit, our system demonstrates the potential to mitigate MEV attacks, which can be of independent interest to other MEV-resistant solutions. We conclude the main contributions of our paper as follows:

\begin{itemize}
\item  \emph{Ensuring data availability.} We propose a novel secure data off-chain design to guarantee data availability of zk-rollups (all participants can access the data). First, we introduce the ``hidden state'' technique for the proof of download, ensuring L2 nodes cannot generate new blocks without downloading historical transactions (\secref{sec: Proof of Download}). Second, to guarantee the long-term accessibility of transactions, we further propose a data availability challenging scheme. This strategy punishes nodes for maliciously deleting historical data and ensures a timely response for each query (\secref{sec: Proof of Existence}). Moreover, we enhance our designs with parallel processing and partial storage, which help to alleviate the bandwidth and storage burdens on L2 nodes. These optimizations improve efficiency while maintaining data availability.

\item \emph{Improving decentralization.} We present a novel protocol to enhance the decentralization of L2 networks. By separating L2 nodes into proposers for selecting transactions and builders for generating aggregated compressed transactions and corresponding proofs, we reduce the hardware requirements for participating in L2, thereby improving decentralization (\secref{sec: New Role Separation}). Additionally, with the help of our ``proof of luck'' and ``period separation'' techniques, we can effectively prevent collusion between proposers and builders (\secref{sec: Proof of Luck and Period Separation}), reinforcing the security of the system and resisting potential MEV attacks (\secref{sec: Against MEV Attacks}).

\item \emph{Implementation and evaluation.} We integrate the proposed approaches to implement a zk-rollup prototype and evaluate its performance and effectiveness (\secref{sec: Evaluation}). Through extensive experiments, we demonstrate the practicality and efficiency of our designs in terms of data availability, decentralization, and resistance to MEV attacks. Our results confirm the viability of our techniques and their potential for real-world adoption. 
\end{itemize}

\section{Related Work}
\noindent \textbf{Data availability}.
Several off-chain scaling solutions have been proposed to improve the throughput of on-chain transactions. Based on the way of ensuring data availability, these scaling solutions can be divided into rollups and sidechains.

Zk-rollup systems address data availability issues by including aggregated compressed transactions in the \texttt{CALLDATA} field of the L1 blockchain \cite{buterin2018chain,zkSync, Loopring}. Unfortunately, the throughput of these approaches is upper bound by the maximum block size and block generation rate of L1. To alleviate this problem, Danksharding provides additional space named ``blob'', which can be up to at most 32MB in each block. In order to reduce the bandwidth burden, L1 validators are only required to sample and download part of blob data to ensure data availability with the data availability sampling (DAS) technique \cite{Danksharding}. Though Danksharding is promising (supported by EIP-4844 \cite{EIP4844}), it incurs two problems impacting data availability. First, L1 nodes accept batches directly without sampling for better efficiency, known as the lazy validator problem. Second, since the blob is a temporary storage space, detailed transactions will be deleted after a period. How to preserve all historical data remains an unsolved issue.

Sidechain-based schemes include plasma \cite{poon2017plasma} and validium \cite{Validium, ZKFair, ImmutableX}. In these approaches, detailed L2 transactions are stored off-chain. Similar to zk-rollups, validium also use validity proofs such as zk-SNARK. The major difference between validium and zk-rollups is that validium does not store transaction data on the Ethereum Mainnet while zk-rollups offer on-chain data availability. Data availability of validium is ensured by the data availability managers who attest to the availability of data for off-chain transactions by signatures. The L1 verifier contract checks them before approving state updates. However, if data availability managers on the validium chain withhold off-chain state data from users, their funds will be frozen since they cannot compute the Merkle proof to prove the ownership of funds and execute withdrawals.

\noindent \textbf{Decentralization}.
Zk-rollups require L2 nodes to efficiently generate proofs to show the transactions are aggregated properly, which poses a heavy burden on L2 nodes and reduces the decentralization of L2. To address the decentralization problem, most of the existing approaches adopt new incentive schemes to attract more nodes \cite{Validium, Volition, zkPorter}, but fail to reduce the hardware requirement. Optimistic rollups \cite{Floersch2019Ethereum, Base, Mode} accept batches (L2 blocks) without the proof and punish nodes who submit invalid batches with fraud proofs, such as Optimism \cite{Optimism} and Arbitrum One \cite{ArbitrumOne}. Though these solutions can reduce computational requirements, users need to wait for a long withdrawal period (up to 20 minutes in existing designs) before spending their money, ensuring others have enough time to challenge batches with fraud proofs (this problem is completely avoided in zk-rollups). Besides, these approaches cannot address the MEV attack \cite{zhou2021just,qin2022quantifying}, which is a serious threat in L2 applications. 

\noindent \textbf{Data storage proof}.
Although our proposed ``proof of existence'' focuses on data availability (i.e. all nodes can access the data), it also partially guarantees historical data storage. We compared it with the existing data storage proof schemes that ensure data is stored honestly.

A successful proof of retrievability (PoR) \cite{anthoine2021dynamic} audit provides a strong guarantee of retrievability: if the server alters many blocks, this would be detected with high probability, whereas if only a few blocks are altered or deleted, then the error correction techniques ensure the file can still likely be recovered. Proofs of space (PoS) \cite{ateniese2014proofs, dziembowski2015proofs} require the prover to use the specified amount of memory to compute proofs. A proof of replication (PoRep) \cite{fisch2018poreps, Filecoin} builds on the two prior concepts of PoR and PoS, which proves that multiple copies of a data file are stored remotely. Proof of storage-time (PoSt) \cite{ateniese2020proof} is initially proposed in the Filecoin whitepaper, ensuring data is stored honestly over a period. Note that these schemes focus on data storage but cannot ensure data availability even when L2 nodes honestly store all transaction data. L2 nodes can withhold the required transaction data instead of responding to the challenger, which makes the transaction data stored but unavailable. Besides, PoR, PoRep, and PoSt sacrifice efficiency to achieve various functions such as multiple separate copies storage and retrievability, which violates the decentralization requirements. Thus, adopting these schemes directly to ensure data availability in our system is not suitable.

\begin{figure}
    \includegraphics[width=0.9\columnwidth]{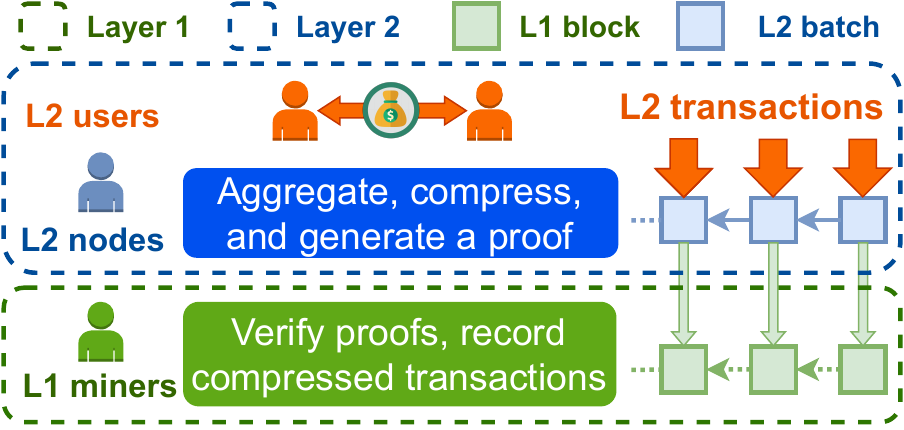}
    \caption{System model of L1 and L2. L2 nodes will compress and aggregate transactions, then send these (compressed) aggregated transactions to L1, together with a proof to show the accuracy of the aforementioned execution.}
    \label{system model}       
\end{figure}

\section{Preliminaries}
\subsection{Blockchain and Layer 2}
\label{sec: Blockchain and Layer 2}
A blockchain is a distributed ledger where transactions are recorded in blocks. In this paper, we refer to public blockchains like Ethereum as L1. L1 nodes are referred to as miners and they are responsible for storing the account states. The L2 network is a secondary framework built upon L1, aimed at improving the scalability of L1 by compressing and aggregating transactions. The blocks in the L2 network are referred to as batches. The L2 node needs to (1) select the transactions to be included in the new batch, (2) compress and aggregate the selected transactions, (3) generate a zk-SNARK proof to show that the (compressed) aggregated transactions are correctly computed from selected ones, (4) broadcast the newly created batch in the L2 networks and send the aggregated transactions to L1, and (5) update the local status when a new valid batch is received. The system model of L1 and L2 is illustrated in \figref{system model}.

In existing zk-rollup systems such as zkSync Era, the L2 network does not necessarily need to synchronize all batches to L1. This is primarily due to the limited number of existing L2 transactions, and synchronizing batches without any transactions would be meaningless. However, this leads to the problem of withdrawal delay, which is similar to optimistic rollup. Since we focus on scenarios with high TPS, we require each batch to be synchronized on L1 in our model.

As our main emphasis is on addressing data availability and decentralization challenges in L2, we do not specify the consensus algorithms used in L2, which implies our solutions are applicable to all consensus algorithms. In real-world scenarios, the most commonly used consensus algorithm for L2 is proof-of-stake (PoS).

\subsection{Notation}
Let $\lambda \in \mathbb{N}$ be a security parameter. $1^\lambda$ denotes its unary representation, and $\mathrm{negl}(\lambda)$ is a negligible function with respect to the security parameter $\lambda$. We denote the process of uniformly sampling an element $x$ from the set $\mathcal{S}$ by $x \overset{\$}{\leftarrow} \mathcal{S}$.

We use $\ZZ_p$ to denote the ring of integers modulo $p$ and $\GG$ to represent a cyclic group with order $p$. Consider two groups, $\GG$ and $\GG_T$. The bilinear mapping function is denoted by $e(\cdot) : \GG \times \GG \rightarrow \GG_T$, where $e(g^a, g^b) = e(g, g)^{ab} \in \GG_T$.

In our design, we make use of several hash functions: $H$, $H_1$, $H_2$, $H_3$, and $H_4$, where $H_1, H_2, H_3, H_4: \{0, 1\}^* \rightarrow \mathbb{Z}_p$. The input space of $H$ is $\{0, 1\}^*$, and we do not specify the output space in the general design. However, in the specific implementation, we specify $H: \{0, 1\}^* \rightarrow \GG$. In some cases, we also require the hash function to deal with multiple inputs $s_1, \cdots, s_m$. Unless otherwise specified, this can be implemented by processing the inputs as $s_1 \| \cdots \| s_m$.

\subsection{Building Block}
\subsubsection{KZG Commitment}
The KZG commitment scheme, introduced by Kate et al. \cite{kate2010constant}, is a constant-size polynomial commitment scheme that enables a prover to show the evaluations of a polynomial at specific points. Since the KZG supports batch verification \cite{boneh2021halo, boneh2020efficient, zhang2022polynomial} for the evaluation of multiple points and polynomials, we employ KZG as the polynomial commitment in our design to enhance efficiency. In our system, we utilize five functions of KZG: $\mathsf{Setup}$, $\mathsf{Commit}$, $\mathsf{Open}$, $\mathsf{Eval}$, and $\mathsf{VerifyEval}$.

\begin{itemize}
    \item $\mathsf{srs} \leftarrow \mathsf{Setup}(1^{\lambda}, D)$: Generate a group $\GG$ based on the security parameter $\lambda$. Sample a group element $g \overset{\$}{\leftarrow} \GG$ and a secret $\alpha \overset{\$}{\leftarrow} \mathbb{Z}_p$. Compute and return the public structured reference string $\mathsf{srs}=(g, g^{\alpha}, ..., g^{\alpha^{D}})$.
    
    \item $C \leftarrow \mathsf{Commit}(\mathsf{srs}, \phi(x))$: Given a $k$-degree polynomial $\phi(x)=\sum_{i=0}^{k}\phi_ix^i$ where $k \leq D$, compute and return its commitment $C=g^{\phi(\alpha)}=\prod_{i=0}^{k}(g^{\alpha^i})^{\phi_i}$.
    
    \item $b \leftarrow \mathsf{Open}(\mathsf{srs}, C, \phi(x))$: Verify that $C$ is the commitment of $\phi(x)$ by checking $C \overset{?}{=} \mathsf{Commit}(\mathsf{srs}, \phi(x))$. Return $b = 1$ if the equation holds; otherwise, $b = 0$.

    \item $(i, \phi(i), \pi_i) \leftarrow \mathsf{Eval}(\mathsf{srs}, \phi(x), i)$: Evaluate $\phi(x)$ at $i$, i.e., $\phi(i)$, and compute the corresponding proof $\pi_i=g^{\frac{\phi(\alpha)-\phi(i)}{\alpha-i}}$ based on $\mathsf{srs}$ in a similar manner to $\mathsf{Commit}$. Return $(i, \phi(i), \pi_i)$.

    \item  $b \leftarrow \mathsf{VerifyEval}(\mathsf{srs}, C, i, \phi(i), \pi_i)$: Verify that $\phi(i)$ is indeed the evaluation of the polynomial committed to $C$ at $i$ by checking $e(C/g^{\phi(i)}, g)$ $\overset{\text{?}}{=}e(\pi_i, g^\alpha/g^i)$. Return $b = 1$ if the equation holds; otherwise, $b = 0$.
\end{itemize}

In our design, we sometimes omit the input $\mathsf{srs}$ for simplicity. KZG commitment satisfies three properties: \textit{completeness}, \textit{polynomial binding}, and \textit{evaluation binding} \cite{kate2010constant}.

\subsubsection{ZK-SNARK}
Let $\mathcal{R}$ be a nondeterministic polynomial (NP) relation. For a pair $(x, w) \in \mathcal{R}$, we refer $x$ to the \emph{statement} and $w$ to the \emph{witness}. A zk-SNARK allows for the succinct proving of the relation $\mathcal{R}$ without revealing any information about the private input $w$. It consists of three polynomial-time algorithms: $\mathsf{Setup}$, $\mathsf{Prove}$, and $\mathsf{Verify}$, which are defined as follows.
\begin{itemize}
    \item $(\mathsf{pk}, \mathsf{vk}) \leftarrow \mathsf{Setup}(\mathcal{R}, \lambda)$: On inputs of the security parameter $\lambda$ and relation $R$, output the prover's proving key $\mathsf{pk}$, and the verifier's verification key $\mathsf{vk}$.  
    \item $\pi \leftarrow \mathsf{Prove}(\mathsf{pk}, x, w)$: Given the statement $x$ and the witness $w$, generate a succinct proof $\pi$ for the public relation $\mathcal{R}$ using the proving key $\mathsf{pk}$.   
    \item $b \leftarrow \mathsf{Verify}(\mathsf{vk}, x, \pi)$: Given the statement $x$ and the proof $\pi$, check whether the relation $\mathcal{R}$ holds using the verification key $\mathsf{vk}$. Output $b = 1$ if accepted and $b = 0$ otherwise.
\end{itemize}

Sometimes, we omit the inputs $\mathsf{pk}$ and $\mathsf{vk}$ for simplicity. ZK-SNARK satisfies four properties: \textit{succinctness}, \textit{completeness}, \textit{soundness}, and \textit{zero-knowledge} \cite{gabizon2019plonk}.

\subsubsection{EIP-4844}
The Ethereum improvement proposal 4844 (EIP-4844) \cite{EIP4844} aims to reduce the gas consumption of rollups by introducing a new transaction format called ``blob transaction''. The blob field in this format can contain a large amount of data that cannot be accessed during Ethereum virtual machine execution. However, the commitment to this data is accessible. By leveraging this design, blob transactions significantly lower the gas fees associated with data storage. The data contained in the blob field can be deleted after a predefined period, offering temporary but substantial scaling relief for rollups.

EIP-4844 is implemented in 2024. Once deployed, this new design will enable rollups to reduce $40$-$100$ times of gas fee compared to existing solutions that utilize \texttt{CALLDATA} storage mechanisms. 

\subsection{Security Assumption}
\label{sec: Security Assumption}
We make the following assumptions in our system design, which are consistent with the assumptions of other schemes \cite{TrustModels}.
\begin{itemize}
\item L1 is trusted and more than 50\% L2 nodes are honest \cite{Optimism, Polygonscan}.
As it is critical to allow all L2 nodes to accept the proposed proof of download (hidden state) in the newly generated batch, if most L2 nodes are dishonest, they can claim an invalid hidden state as a valid one. Additionally, when the L2 batch generation rate is higher than L1 block generation\footnote{In terms of block/batch propagation rate of L1 and L2, this paper mainly focuses on the case when two rates are the same. It is easy to extend our techniques to the scenario when the two rates are different.}, the validity of some batches entirely relies on the consensus of L2, which requires a majority of nodes to be honest.

\item In the historical data storage scenario, we do not consider collusion between different L2 nodes. The assumption is reasonable because this collusion does not damage data availability. For example, when node \textit{B} provides complete transaction data for node \textit{A}, \textit{A} can always respond to the challenger if it can quickly obtain the challenged data from \textit{B}. At this point, the challenger cannot distinguish between this scenario and the case when \textit{A} never deletes the transaction data. From the user's perspective, data availability still holds, though this situation may reduce the robustness of the entire system. Besides, our design can partially resist this attack. 

\item Malicious L2 nodes are profit-driven and can launch attacks for more profits, such as deleting historical transaction data to reduce storage costs.
\end{itemize}

\section{Observation and Problem Statement}

\noindent \textbf{Observation}.
We analyze the cost of L2 nodes, considering three main components: bandwidth cost, storage cost, and computation cost. With the introduction of blob transactions from Danksharding, Ethereum's TPS will improve significantly due to larger L1 blocks (from $100$KB to $32$MB), posing a challenge for L2 networks. Let's consider an example that a $24$MB batch need to be generated within $12$s. In this case, the node's bandwidth requirement is $10$MB/s, and it needs approximately $6$TB of storage per month. These bandwidth and storage costs amount to around $300$ and $200$ USD/month, respectively, representing a significant expense for L2 nodes. Besides, computation cost is also significant. For instance, a batch with $50$ transactions using a Groth16 circuit requires about $2^{25}$ constraints to generate a proof on $32$ vCPUs \cite{chen2023hyperplonk}. The cost of generating a single zk-SNARK proof for a batch of $50$ transactions is approximately $15.1$ USD and will increase significantly when including more transactions \cite{chen2023hyperplonk}.

Based on the above analysis, the primary issue contributing to data availability is the costly nature of bandwidth and storage. L2 nodes may be reluctant to download and store batches honestly due to the bandwidth and storage costs. On the other hand, the challenge of decentralization in L2 networks stems from the high costs associated with computation. This expense may discourage users from participating in L2 networks.

\noindent \textbf{Problem Statement}.
This paper addresses the challenges of data availability and decentralization in L2 networks. In terms of data availability, zk-rollups encounter issues such as lazy validator and historical data problems after adopting Danksharding. One potential solution for the lazy validator problem is to require all L2 nodes to \textit{download} the batch and update their states independently of L1, ensuring previous batches are available for other L2 nodes. Unfortunately, L2 nodes may lack the incentive to download due to the bandwidth costs. To address the historical data problem, the L2 network should \textit{maintain} all historical data and allow users to access it when queried. However, even if all historical data are downloaded, L2 nodes may delete them to save storage. Besides, malicious L2 nodes could withhold the challenged data from the user, even if they have stored the data.

For decentralization, since zk-rollups require L2 nodes to generate zk-SNARK proofs within a short period, we need to reduce the computation requirement of joining L2. Additionally, data availability requires L2 to download and store data honestly, which incurs a heavy burden on bandwidth and storage. In our system, we also need to reduce these costs in our design.

We summarize the problems addressed in this paper as follows.

\textbf{Problem 1}: How can we guarantee that L2 nodes download transaction data?

\textbf{Problem 2}: How to ensure historical data exists in L2 and is accessible to user?

\textbf{Problem 3}: How can we reduce the hardware requirements of L2 nodes for decentralization?

\noindent \emph{Remarks}.
This paper focuses on \textit{data availability} instead of \textit{data storage}.
There are two cases that can illustrate the difference between data availability and data storage. First, malicious L2 nodes store transaction data honestly but withhold it when a challenger requires it. In the view of the challenger, the required transaction data is unavailable, even if it is indeed stored in L2. In this case, data storage is ensured, but data availability is violated. Second, malicious L2 nodes can delete transaction data and re-download the deleted data from other nodes once it is queried. In the challenger's view, the requested transaction data is available, though it is deleted on this node most of time. In this case, data availability is ensured but data storage is violated.

\section{Data Availability}


\subsection{Proof of Download}
\label{sec: Proof of Download}

To address the lazy validator problem in existing zk-rollup solutions, we introduce the concept of ``\emph{proof of download}'', which requires that L2 nodes are obligated to verify data availability (download complete transactions) before generating the next L2 batch. If L2 nodes are too lazy to check data availability, they cannot generate the latest batch. Besides, L2 nodes must download previous transactions, partly addressing the historical data problem. We define proof of download as follows:

    
    

\begin{definition}[Proof of Download] 
\label{def: Proof of Download}
Proof of download is calculated based on the previous transactions and will be included in the header of the next L2 batch. 
Proof of download satisfies two properties: 1) honest L2 nodes who download previous transactions can compute proof of download for batch generation and verification (\textbf{completeness}), and 2) malicious L2 nodes without previous transactions cannot forge valid proof of download (\textbf{soundness}).
\end{definition}




\noindent \textbf{Hidden state}.
To achieve proof of download, we propose a new design called ``\emph{hidden state}'', which is an additional field in the L2 batch header. This hidden state is computed based on all transactions in the previous batch(es). Denote the transactions and hidden state in the $i$-th batch as $\mathrm{TX}_i$ and $\mathrm{hidden\_state}_i$, respectively. One straightforward design approach is to calculate the hidden state as $\mathrm{hidden\_state}_i = H(\mathrm{TX}_{i-1})$, where $H$ is a hash function. When generating the $i$-th batch, L2 nodes compute and include $\mathrm{hidden\_state}_i = H(\mathrm{TX}_{i-1})$ in the batch header. Once the $i$-th batch is published, other L2 nodes can verify whether $\mathrm{hidden\_state}_i$ is correctly computed. Since $\mathrm{hidden\_state}_i$ cannot be forged without knowing $\mathrm{TX}_{i-1}$, and the $\mathrm{hidden\_state}_i$ is not publicly available before an L2 node publishes it, all L2 nodes must download $\mathrm{TX}_{i-1}$ to participate in the generation of the $i$-th batch (We also consider the case that L2 nodes can share the hidden state with others. Although an L2 node can obtain an out-of-data hidden state from others, it cannot generate the latest batch based on the out-of-data hidden state, see more in \secref{sec: Discussion}). When synchronizing the aggregated transactions with L1, we also require the hidden state to be recorded on L1. However, the straightforward design described above incurs two issues: 
\begin{itemize}
    \item Increasing bandwidth. The current design is serial processing, which requires L2 nodes to download $\mathrm{TX}_{i-1}$ before calculating the $\mathrm{hidden\_state}_i$. This increases the L2 nodes' bandwidth requirements when TPS increases. 
    \item Unfair competition. In some L2 networks where nodes must compete for generating batches, the node successfully generating the previous batch has more time to generate the next batch since it has $\mathrm{TX}_{i-1}$ and can compute the $\mathrm{hidden\_state}_i$ directly.
\end{itemize} 



\noindent \textbf{Parallel processing}.
To address the mentioned issues, we set the hidden state of $\text{\em batch}_{i}$ as $\mathrm{hidden\_state}_i = H(\mathrm{TX}_{i-2})$. This enables parallel processing, allowing L2 nodes to download previous transactions and generate new batches simultaneously. This idea is based on the fact that an honest L2 node already has $\mathrm{TX}_1,..., \mathrm{TX}_{i-2}$ when generating $\text{\em batch}_{i}$. Therefore, it can use $\mathrm{TX}_{i-2}$ to compute $\mathrm{hidden\_state}_i$ and download $\mathrm{TX}_{i-1}$ simultaneously during the batch generation period. Note that L2 nodes also need to obtain the latest states before generating the next batch. This can be done by synchronizing with L1, which is much more efficient than deriving from $\mathrm{TX}_{i-1}$. 
With this design, both problems mentioned earlier can be addressed:



\begin{itemize}
    \item For the bandwidth issue, originally, L2 nodes are required to download $\mathrm{TX}_{i - 1}$ before generating $\text{\em batch}_{i}$ (serial processing). Both tasks must be completed within the batch generation period. However, in the new parallel processing design, L2 nodes can utilize the (almost) entire period to download $\mathrm{TX}_{i - 1}$. This modification effectively reduces the bandwidth requirement. Note that by reducing the bandwidth requirement, we enable more nodes to participate in the L2 network. This increased participation contributes to the decentralization of the L2 network. 
    \item For the unfair competition, the time difference is significantly reduced in the new design because L2 nodes only need to update their states from L1 using the aggregated compressed transactions, which are much smaller compared to the complete transactions. For instance, in the Loopring network, the aggregated transactions after compression are approximately $12$ Bytes, whereas complete transactions are about $310$ Bytes \cite{sguanci2021layer}.
\end{itemize}



\noindent \emph{Remarks}. 
We can also associate the hidden state with other historical transaction(s). For example, we can set (1) $\mathrm{hidden\_state}_i = H(\mathrm{TX}_{i-k})$ for some $k>2$; (2) $\mathrm{hidden\_state}_i = H(\mathrm{TX}_{i-2}, \mathrm{TX}_{i-k_1}, $ $\cdots, \mathrm{TX}_{i-k_n})$ where $k_1, \cdots, k_n$ are random challenges (generated from L2's consensus algorithm); or (3) $\mathrm{hidden\_state}_i = H(\mathrm{TX}_1, \cdots, $ $\mathrm{TX}_{i - 2})$. The first variate ensures that L2 nodes must download $\mathrm{TX}_{i-k}$ before generating the $i$-th batch. The bandwidth cost for this variation is similar to the $k = 2$ case since the downloading speed should be at least as fast as the batch generation rate. In the second variation, the L2 network tests whether L2 nodes process $\mathrm{TX}_{i-2}, \mathrm{TX}_{i-k_1}, \cdots, \mathrm{TX}_{i-k_n}$. Without these transactions, L2 nodes are unable to generate the $i$-th batch. For the last design, the hidden state guarantees that L2 nodes have processed all previous transactions. 
Note that the computation of the latest hidden states cannot be derived from the intermediate values of previous hidden states. This requires the intermediate values of $H(M)$ and $H(M')$ to be independent, even when $M \cap M' \neq \varnothing$.

\subsection{Proof of Existence}
\label{sec: Proof of Existence}
While the hidden state ensures that L2 nodes must download complete transactions, it does not prevent malicious nodes from deleting these data after generating batches and withholding them upon queries/challenges. One straightforward solution is to use the last variant of the hidden state, $\mathrm{hidden\_state}_i = H(\mathrm{TX}_1, \cdots, \mathrm{TX}_{i - 2})$. Unfortunately, it is inefficient for real-world designs as L2 nodes would need to read all historical transactions to compute the hidden state. To address these problems, we introduce the concept of ``\emph{proof of existence}'', which ensures all transactions exist and are accessible in L2. We define proof of existence as follows:

\begin{definition}[Proof of Existence]
\label{def: Proof of Existence}
Proof of existence is calculated based on the challenged data, which serves as proof to show the data exists and is accessible to the challenger.
Proof of existence satisfies two properties: 1) honest L2 nodes with the challenged data can compute valid proof of existence to allow the challenger to verify  (\textbf{completeness}), and 2) malicious L2 nodes without the data cannot forge valid proof of existence (\textbf{soundness}).
\end{definition}

\noindent \textbf{Data Availability Challenging}.
To ensure data existence and accessibility, we propose a challenging approach to punish nodes that fail to provide the expected transactions. Specifically, we require each node to deposit funds into an arbiter contract before joining the L2 network. Only nodes with a valid deposit are recognized as legitimate L2 nodes and are allowed to generate batches. Any participant, including users or other L2 nodes, can act as a challenger. When a challenger asks an L2 node for the detailed transactions of the $i$-th batch, the node is required to respond with $\mathrm{TX}_i$ (as proof of existence) to the arbiter contract within a predefined period. If the node fails to respond or provides an incorrect response, the arbiter contract transfers the payment to the challenger. This approach incentivizes L2 nodes to preserve all historical transactions in order to respond successfully to challenges. 


This approach ensures data existence by requiring each L2 node to store all historical transactions. However, with the increase of TPS, this requirement can impose a significant storage burden on L2 nodes. Recall data availability in L2 networks requires the entire historical transaction data can be recovered by the whole L2 network rather than an individual L2 node. This motivates us to adopt ``partial storage'' in more decentralized L2 networks. Each node is responsible for storing only a portion of the historical transactions to alleviate the storage burden while still ensuring the complete historical data is recoverable by all nodes with a high probability. Accordingly, some adjustments must be made to data availability challenging to accommodate this design.

\noindent \textbf{Partial storage}.
To enable each L2 node to store only a portion of the historical transactions, we adjust our hidden state design with the KZG commitment \cite{kate2010constant} for the $H$ function, allowing for the processing of more structured data. The fundamental idea is to divide the historical transactions $\mathrm{TX}_{i}$ into $k$ parts: $\mathrm{TX}_{i} = (\mathrm{TX}_{i}^{(0)}, \ldots, \mathrm{TX}_{i}^{(k - 1)})$. Each part is then processed with a hash function $H_1$, resulting in values $v_j = H_1(\mathrm{TX}_{i}^{(j)})$ for $j \in [0, k)$. Subsequently, an L2 node generates a $(k - 1)$-degree polynomial $\phi_{i}(X)$ such that $\phi_{i}(j) = v_j$. As a result, for transactions $\mathrm{TX}_{i}$ in the $i$-th batch, an L2 node only needs to store an arbitrary part $\mathrm{TX}_{i}^{(j)}$ and the corresponding proof $\pi_{i}^{(j)}$. It is important to note that this partial storage approach is particularly suitable for highly decentralized networks, where even with some failures of L2 nodes, the network can still recover the complete data.

 



\noindent \textbf{Efficient arbiter contract}.
\label{sec: Efficient arbiter contract}
We now show the logic of the arbiter contract. On receiving the challenge for the transactions in the $i$-th batch, the contract reads the $\mathrm{hidden\_state}_{i+2}$ from the $(i+2)$-th batch, which is a KZG commitment to the corresponding polynomial of $\mathrm{TX}_i$. Intuitively, the L2 node should respond with $\mathrm{TX}_{i}^{(j)}$ and $\pi_{i}^{(j)}$, allowing the contract to check with $\mathsf{KZG.VerifyEval}$, which proves both existence and accessibility. While this intuitive approach works when the size of $\mathrm{TX}_{i}^{(j)}$ is small, it can result in high transaction fees (gas) for L2 nodes, especially when $\mathrm{TX}_{i}^{(j)}$ grows larger in high throughput scenarios. Besides, in some cases,  users may simply wish to confirm the existence of their data rather than retrieve it. The straightforward approach will result in transaction fee waste for both users and L2 nodes when supporting frequent data existence checks (this method also incurs the DDoS problem which we discuss in the \secref{sec: Discussion}). 

To ensure efficiency, we construct two functions for data availability challenging: data-retrieval and data-checking. 

For data retrieval, we consider an L2 data storage market where users store their data on the L2 network. Users can invoke the data-retrieve function to access their own data. This request can be made to the entire L2 network (i.e., any L2 nodes can respond), or to a specific node (if the node indicates it possesses the expected data through offline communication). The function can be built with the straightforward logic mentioned above or with a fair-exchange protocol \cite{dziembowski2018fairswap, zkSync} for large data.

Additionally, we design an efficient data-checking function tailored for data existence checks. A naive idea is to let the L2 node being challenged respond with a value $v_j = H_1(\mathrm{TX}_i^{(j)})$ and a zk-SNARK proof $\pi_v$ to show $v_j$ is properly generated. However, it incurs a serious problem as the L2 node can store $v_j$ and $\pi_v$ instead of $\mathrm{TX}_{i}^{(j)}$, which does not satisfy the soundness of ``proof of existence''. The underlying problem lies in the fact that the proof $\pi_v$ remains valid for all challenges, whereas we expect it to be valid solely for the current one. 

To address this problem, we bind $\pi_v$ to a random challenge from the challenger. Specifically, when challenging the $i$-th batch, the challenger also sends a challenge $c$ to the arbiter contract. The L2 node should retrieve $c$ from the contract and calculate $r = H_2(c, \mathrm{TX}_{i}^{(j)})$ where $H_2$ is a hash function. The zk-SNARK proof shows both $v_j$ and $r$ are correctly computed, i.e., $\pi_v = \mathsf{SNARK.Prove}$ $((c, v_j, r), (\mathrm{TX}_i^{(j)}))$ which indicates the following relation
\begin{align}
    v_j = H_1(\mathrm{TX}_i^{(j)}) \quad \wedge \quad r = H_2(c, \mathrm{TX}_{i}^{(j)}).
\nonumber
\end{align}
Consequently, the arbiter contract checks $v_j$ using the zk-SNARK verification $\mathsf{SNARK.Verify}((c, v_j, r), \pi_v) \overset{?}{=} 1$ and the KZG verification for evaluation $\mathsf{KZG.VerifyEval}(\mathrm{hidden\_state}_{i+2}, j, v_j, \pi_{i}^{(j)}) \overset{?}{=} 1$. Since the size of $v_j$ and $\pi_{i}^{(j)}$ remain constant and both $\mathsf{SNARK.Verify}$ and $\mathsf{KZG.VerifyEval}$ operations require constant time, the gas fee will not increase even when $\mathrm{TX}_{i}^{(j)}$ is large.



\section{Decentralization}

\subsection{New Role Separation}
\label{sec: New Role Separation}
The current high computation requirement for zk-rollup networks poses a significant challenge to decentralization. L2 nodes need to generate a zk-SNARK proof to show each batch is valid within a short period. Prior to presenting our solution to this issue, we first recall the procedures of batch generation of L2 nodes (\secref{sec: Blockchain and Layer 2}). Among these tasks, the first one of transaction selection requires less computational power and is considered the ``brain work'' of zk-rollups. This insight forms the basis of our proposer/builder separation (PBS) technique,\footnote{Compared with the PBS in Danksharding \cite{PBS}, which works on L1, our L2 PBS is relatively simpler. Instead of directly adopting the PBS in Danksharding, we simplified PBS specifically tailored for L2 networks.} which separates the roles of L2 nodes into \emph{proposers} and \emph{builders}. 

\begin{itemize}
    \item \textit{Proposers} are L2 nodes with limited hardware resources. Their primary responsibility is to select the transactions that will be included in the next batch. Since the tasks of proposers do not impose much hardware requirement, anyone can participate as a proposer, allowing for a broader pool of participants and improving the decentralization of L2 networks.
    \item \textit{Builders} are L2 nodes equipped with more powerful hardware. They take the selected transactions from proposers and perform computationally intensive tasks such as transaction aggregation, zk-SNARK proof generation, batch storage, and broadcasting.
\end{itemize}
In our PBS design, we also utilize the temporal storage of blob transactions on L1 as a public board to facilitate coordination between proposers and builders. The basic workflow of our PBS scheme is illustrated in \figref{fig8}.

\begin{figure}
\centering
\includegraphics[width=0.9\columnwidth]{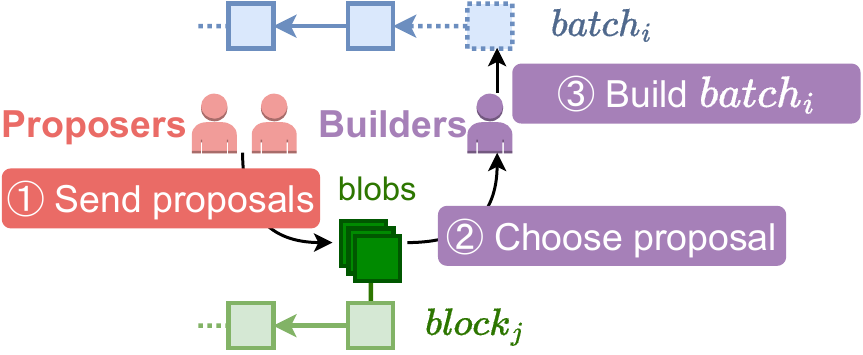}
	\caption{The role separation (PBS) in L2.}
	\label{fig8}
\end{figure}

\emph{Step 1}: Each proposer selects a set of transactions, denoted as $\{\mathsf{tx}_1, \cdots ,\mathsf{tx}_m\}$, from the transaction pool. It then computes a proposal, which includes $\{H_3(\mathsf{tx}_1), \cdots, H_3(\mathsf{tx}_m)\}$, where $H_3$ is a hash function. The proposer publishes the proposal on L1 within blob transactions.

\emph{Step 2}: To generate new batches, a builder selects a preferred proposal from L1 blocks and downloads the corresponding transactions from the transaction pool.

\emph{Step 3}: The builder performs the remaining tasks to generate the new batch, which includes compressing and aggregating transactions, generating the zk-SNARK proof, and broadcasting the newly created batch.

\subsection{Proof of Luck and Period Separation}
\label{sec: Proof of Luck and Period Separation}


In our role separation scheme, it is essential to prevent collusion between builders and proposers to maintain decentralization within the L2 network. If a builder with high computational power or stake consistently selects proposals from a specific proposer, the L2 network will be in centralized control. To address this concern and ensure that builders choose proposers randomly, we introduce a technique called ``\emph{proof of luck}'', which serves as a plug-in module within the underlying consensus algorithm:

\begin{itemize}
    \item Proposer $j$ is assigned with a unique ID denoted as $\mathrm{Proposer}_j$, which is signed by a recognized party upon joining the L2 network.
    
    \item To generate $\text{\em batch}_{i}$, builders need to compute a ``\textit{lucky number}'' based on the header of the previous batch, $\mathrm{luck}_{i} = H_4(\text{\em batch}_{i-1}.\mathrm{head})$, where $H_4$ is a hash function. 
    
    \item The probability of a builder generating the $i$-th batch is (partially) determined by the difference between the chosen proposer and the lucky number, such as $|\mathrm{Proposer}_j - \mathrm{luck}_{i}|$.
\end{itemize}
For example, in PoS-based networks, the builders are required to find an appropriate $r$ which is included in the batch header such that $\mathrm{Hash}(\text{\em batch}_i.\mathrm{head}(r)) < D(|\mathrm{Proposer}_j - \mathrm{luck}_{i}|)$, where $D$ is a monotone decreasing function to control the difficulty.\footnote{In the case of PoW-based networks, the difficulty can be directly set to $D = f(|\mathrm{Proposer}_j - \mathrm{luck}_{i}|)$ where $f$ is a monotone increasing function.} The probability of finding a proper $r$ within a period decreases if a builder includes a proposal from the proposer whose ID deviates from $\mathrm{luck}_{i}$.
Therefore, a malicious builder cannot collude with a proposer beforehand and increase their chances of success simultaneously. 

\noindent \emph{Remarks}. Our scheme also has flexibility since builders are advised to select the proposal whose proposer ID is close to the lucky number instead of choosing compulsorily a specific proposal. If there are two proposals whose proposer IDs are both close to the lucky number, honest builders can select the second proposal, which has more gas reward even if the first proposer ID is closer to the lucky number. 

However, this straightforward design only addresses \textit{premeditated} collusion. It does not prevent collusion \textit{after} the previous batch is published. Since the lucky number is determined by the previous batch, a builder could potentially engage in negotiations with proposers whose IDs are close to the $\mathrm{luck}_{i}$ once it is settled. To mitigate this issue, we propose the \emph{period separation} scheme.

\begin{figure}
\includegraphics[width=1\columnwidth]{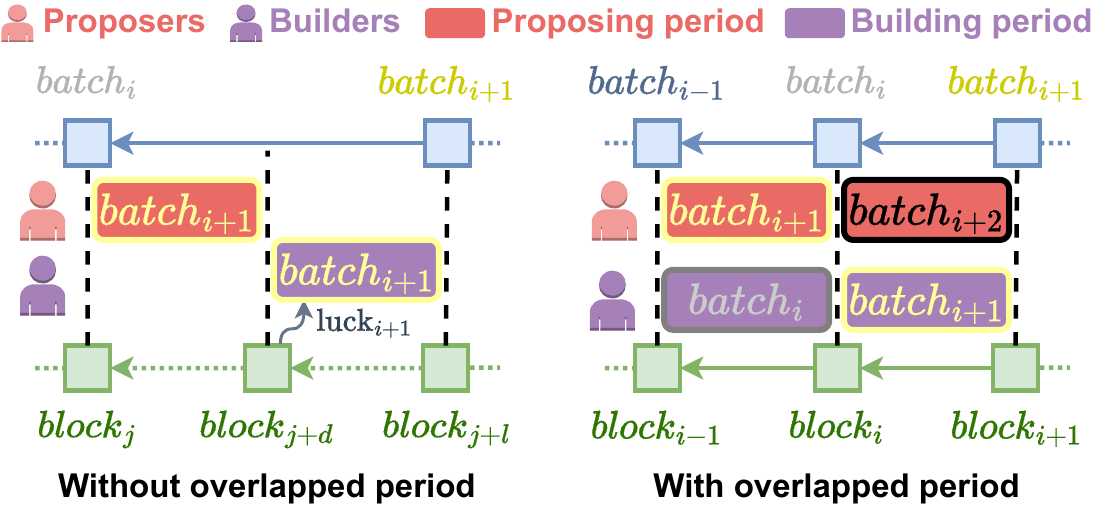}
	\caption{Period separation. The lucky number is set based on the last block in the proposing period.}
	\label{fig9}
\end{figure}



\noindent \textbf{Period separation}.
Suppose the batch generation period in L2 is $l$ times longer than the block generation period in L1. $\text{\em batch}_{i}$ is synchronized on $\text{\em block}_j$ of L1 and $\text{\em batch}_{i + 1}$ is (expected to be) synchronized on $\text{\em block}_{j + l}$. In our period separation scheme, we divide the batch generation period into two based on a predefined value $d$: the proposing period from L1 $\text{\em block}_j$ to $\text{\em block}_{j+d}$, and the building period from L1 $\text{\em block}_{j+d}$ to $\text{\em block}_{j + l}$, as illustrated in \figref{fig9} (without overlapped period). The detailed procedure is as follows:

\begin{itemize}
    \item \emph{Proposing period.} Proposers create their proposals and send them to L1. These proposals are recorded in blob transactions within blocks $\text{\em block}_j$ to $\text{\em block}_{j+d}$. Only proposals generated within this period are considered valid.
    \item \emph{Building period.} A builder computes the lucky number $\mathrm{luck}_{i + 1} $ $= H_4(\text{\em block}_{j + d}.\mathrm{head})$. Then, it selects a proposal created during the proposing period to build the new L2 batch. Notably, the lucky number is now computed based on $\text{\em block}_{j + d}$ rather than $\text{\em batch}_{i}$. This change ensures that proposers cannot submit new proposals once the lucky number has been determined, preventing any attempt to manipulate the selection process. 
\end{itemize}
However, this approach is only suitable when the batch generation time is (at least twice) longer than the block generation time in L1, i.e., $l \geq 2$. This condition reduces the upper bound of TPS since the theoretical batch generation rate can be the same as (or higher than) the block generation rate in L1. To address this limitation, we propose an \emph{overlapped} period separation scheme.

\noindent \textbf{Overlapped period separation}.
To improve the batch generation rate, we leverage parallel processing by allowing a period to serve as both the proposing period and the building period for different batches, i.e., the period from L1 $\text{\em block}_i$ to $\text{\em block}_{i+1}$ serves as the proposing period for $\text{\em batch}_{i + 1}$ and the building period for $\text{\em batch}_{i}$. The blob field of $\text{\em block}_{i}$ contains the proposals for $\text{\em batch}_{i+1}$. This design is illustrated in \figref{fig9} (with overlapped period) and the detailed procedure is described as follows: 

\begin{itemize}
    \item After the $\text{\em batch}_{i - 1}$ is synchronized on L1 $\text{\em block}_{i-1}$, proposers send their proposals for $\text{\em batch}_{i + 1}$, which will be recorded in the blob field of $\text{\em block}_i$. 
    \item Meanwhile, builders compute $\mathrm{luck}_i = H_4(\text{\em block}_{i-1}.\mathrm{header})$ and select proposals in $\text{\em block}_{i-1}$ to construct $\text{\em batch}_i$.
    \item Once the $\text{\em batch}_i$ is synchronized on $\text{\em block}_i$, proposers and builders repeat the aforementioned operations for $\text{\em batch}_{i+1}$.
\end{itemize}
In our design, it is crucial to ensure that a proposal originates from a legitimate proposer within the designated proposing period. We address this through two steps. First, the legitimacy of the proposers is verified by L1 miners, who check whether the proposers are signed by the recognized party (via a smart contract). Second, once a batch is published, other builders within the network validate that the corresponding proposal indeed originates from the correct proposing period and has not been tampered with.



\subsection{Against MEV Attacks (Bonus)}
\label{sec: Against MEV Attacks}
One bonus feature of our proposed system is its resilience against MEV attacks, which have been a significant concern in blockchain networks. MEV attacks involve L2 nodes taking advantage of profitable transactions by front-running, back-running, or sandwiching them with MEV transactions to maximize their profits at the expense of other users.

In our system, the role separation technique (PBS) ensures that the tasks of selecting transactions and building batches are performed by different L2 nodes. When building a batch, since all transactions are chosen from the selected proposal, builders cannot insert additional transactions but will decide the order of transactions. Furthermore, the collaboration between builders and proposers is avoided by proof of luck and period separation. This means that proposers cannot predict the order in which the MEV and target transactions will be placed within the batch. While it is possible to execute MEV attacks when an attacker instantiates many proposers to increase the chance of having a proposer ID close to the lucky number, the cost of MEV attacks also increases significantly, which makes the attack not profitable. First, proposers need to register with a recognized party (require some cost). Second, proposers need to pay gas fees for sending proposals to L1. Therefore, our design enhances the security and fairness of transaction execution for all participants.


\section{System Design} 
\label{sec: System Design}
\subsection{Initialization Phase}
Before activating the L2 network, two smart contracts need to be deployed: a smart contract to verify the validity of batches and an arbiter contract for data availability challenges. To set
up the network, a predefined value $D$ is determined, representing the maximum number of data parts in partial storage. The network runs the $\mathsf{srs} = \mathsf{KZG.Setup}(1^{\lambda}, D)$ to generate the public parameters for the KZG scheme, and \textsf{Setup} of zk-SNARK to get the proving key \textsf{sk} and the verification key \textsf{vk}. These setups can be carried out by a trusted third party or through multi-party computation.

L2 nodes are divided into two roles: proposers and builders. To join the L2 network, a proposer must register with a recognized party to obtain a unique proposer ID, and a builder is required to deposit funds into the $\adv\cdv$ contract. The total number of builders in the L2 network determines the value of $k$, representing the number of data parts in partial storage. 

\subsection{Batch Generation and Verification}
When generating the $i$-th batch, the proposer $\pdv$ and builder $\bdv$ perform the following tasks.

In the period from $\text{\em block}_{i-2}$ to $\text{\em block}_{i-1}$:
\begin{itemize}
    \item $\pdv$: \emph{(Build a proposal)} Select a set of transactions $\mathsf{tx}_1, \cdots,\mathsf{tx}_m$ from the transaction pool and construct a proposal that includes $H_3(\mathsf{tx}_1), \cdots, H_3(\mathsf{tx}_m)$. Send the proposal in a blob transaction to L1. The proposal is expected to be included in $\text{\em block}_{i-1}$ of L1.
\end{itemize}

In the period from $\text{\em block}_{i-1}$ to $\text{\em block}_{i}$:
\begin{itemize}
    \item $\bdv$: \emph{(Download transactions and update states)} Download the transactions $\mathrm{TX}_{i - 1}$ from the previous batch $\text{\em batch}_{i - 1}$ in L2. Update the latest state from $\text{\em block}_{i - 1}$ in the L1 network.
    \item $\bdv$: \emph{(Compute hidden state)} Divide transactions $\mathrm{TX}_{i-2}$ into $k$ parts, denoted as $\mathrm{TX}_{i-2} = (\mathrm{TX}_{i-2}^{(0)}, \cdots, \mathrm{TX}_{i-2}^{(k-1)})$. For each $j \in [0,k)$, compute $v_j = H_1(\mathrm{TX}_{i-2}^{(j)})$. Interpolate to derive a $(k-1)$-degree polynomial $\phi_{i-2}(x)$ based on $k$ points $((0, v_0), \cdots, (k-1, v_{k-1}))$. Set $\mathrm{hidden\_state}_{i}$ to $\mathsf{KZG.Commit}(\phi_{i-2}(x))$.
    \item $\bdv$: \emph{(Store partial data)} Randomly select a part $\mathrm{TX}_{i-2}^{(j)}$ and removes the other parts. Run $(j, \phi_{i-2}(j), \pi_i^{(j)}) = \mathsf{KZG.Eval}(\phi_{i-2}(x), j)$ to obtain $\pi_i^{(j)}$ for data availability challenges. Store the tuple $T_i = (j, \mathrm{TX}_{i-2}^{(j)}, \pi_i^{(j)})$.
    \item $\bdv$: \emph{(Select proposal)} Compute $\mathrm{luck}_i = H_4(\text{\em block}_{i-1}.\mathrm{header})$. Collect a preferred proposal from $\text{\em block}_{i-1}$ whose proposer ID is close to $\mathrm{luck}_i$. Generate a membership proof $\pi_{\mathsf{p}}$ to show the proposal is from $\text{\em block}_{i-1}$.
    \item $\bdv$: \emph{(Build batch)} Select transactions $\mathsf{tx}_1, \cdots, \mathsf{tx}_m$ based on the proposal and decide the order of them. Compress and aggregate transactions to construct $\text{\em batch}_{i}$. Generate a zk-SNARK proof $\pi_i$ to demonstrate that $\text{\em batch}_{i}$ is properly computed.
    \item $\bdv$: \emph{(Publish new batch)} Publish $\text{\em batch}_{i}$ on the L2 network. Send the compressed aggregated transactions, $\pi_i$, $\pi_{\mathsf{p}}$, the selected proposal, and $\mathrm{hidden\_state}_{i}$ to the smart contract on the L1 network.
\end{itemize}

To verify the validity of $\text{\em batch}_{i}$, other builders $\bdv$ and the smart contract $\sdv\cdv$ perform the following tasks.

\begin{itemize}
    \item $\bdv$: Check the $\mathrm{hidden\_state_{i}}$ sent to L1 is same as the $\mathrm{hidden\_state_{i}}$ in $\text{\em batch}_{i}$. Verify $\mathsf{KZG.Open}(\mathrm{hidden\_state}_{i}, \phi_{i-2}(x)) \overset{?}{=} 1$, where $\phi_{i-2}(x)$ is the interpolated polynomial based on $\mathrm{TX}_{i - 2}$. Perform additional checks if necessary. Sign a note for ``$\text{\em batch}_{i}$ is valid'' if all checks pass. Send the note to $\sdv\cdv$ on L1.
    \item $\sdv\cdv$: Verify that the proposal is from $\text{\em block}_{i-1}$ with $\pi_{\mathsf{p}}$ and is originated from a valid proposer. Check the compressed and aggregated transactions are properly computed from the proposal with $\pi_i$. Once all the necessary checks pass and enough valid notes are received, regard $\text{\em batch}_{i}$ as valid and record $\mathrm{hidden\_state}_{i}$. Proceed to update the states accordingly.
\end{itemize}

\subsection{Data Availability Challenging}
In our design, we allow anyone to act as a challenger $\cdv$ to challenge a builder $\bdv$ regarding the storage of historical transactions. This is done through the arbiter contract $\adv\cdv$ with the following procedures.

\begin{itemize}
    \item $\cdv$: Generate a random challenge request $\mathsf{req} = \mathsf{PoE.Challenge}(i)$ to challenge the transactions in the $i$-th batch. Send $\mathsf{req}$ to $\adv\cdv$.
    \item $\bdv$: Collect $\mathsf{req}$ from $\adv\cdv$. Retrieve $T_i = (j, \mathrm{TX}_{i}^{(j)}, \pi_i^{(j)})$ from the local storage. Run $\pi = \mathsf{PoE.Response}(\mathsf{pk_E}, \mathsf{req}, T_i)$ to derive PoE proof $\pi_\mathsf{E}$ as the response. Send the tuple $\pi_\mathsf{E}$ to $\adv\cdv$.
    \item $\adv\cdv$: Collect $\mathrm{hidden\_state}_{i+2}$ from L1. Verify whether $\mathrm{TX}_{i}^{(j)}$ is a valid part of $\mathrm{TX}_{i}$ with
    $\mathsf{PoE.Verify}(\mathsf{vk_E}, \mathsf{req}, \pi_\mathsf{E}, \mathrm{hidden\_state}_{i+2}) \overset{?}{=} 1$. If the check fails or $\bdv$ does not respond within a predefined period, transfer the deposit of $\bdv$ to $\cdv$.
\end{itemize}

\section{Security Analysis}
\label{sec: Security Analysis}
We conduct a formal security analysis of the proposed schemes. First, we demonstrate that honest builders can successfully finish the following tasks: hidden state computation, new batch generation, and response to data availability challenges. 

\begin{theorem}[Completeness]
\label{the: Completeness}
When generating $i$-th batch, an honest builder $\bdv$ can compute a valid proof of download (proof of download completeness in \defref{def: Proof of Download}). If $\bdv$ selects the correct proposal from $\text{\em block}_{i-1}$, the probability of finding a proper $r$ is maximized when building new batches. If $\bdv$ is challenged with the $i$-th batch, it can respond to the arbiter contract correctly within a period (proof of existence completeness in \defref{def: Proof of Existence}).
\end{theorem}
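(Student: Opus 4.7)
The plan is to establish completeness by decomposing the theorem into its three natural sub-claims and, for each one, tracing the honest builder's actions through the appropriate building block (KZG commitment, the lucky-number rule, or zk-SNARK) and invoking the completeness property of that primitive.

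First, for proof-of-download completeness, I would show that an honest $\bdv$ which has already downloaded $\mathrm{TX}_{i-2}$ can correctly compute $\mathrm{hidden\_state}_i$ in the form required by Section~\ref{sec: Proof of Existence}. The argument is direct: the builder partitions $\mathrm{TX}_{i-2}$ into $k$ pieces, evaluates $v_j = H_1(\mathrm{TX}_{i-2}^{(j)})$ for $j \in [0,k)$, interpolates the unique polynomial $\phi_{i-2}$ of degree at most $k-1$ through the points $(j, v_j)$, and sets $\mathrm{hidden\_state}_i = \mathsf{KZG.Commit}(\phi_{i-2})$. Another honest verifier, possessing the same $\mathrm{TX}_{i-2}$, will reconstruct the identical polynomial (uniqueness of Lagrange interpolation) and run $\mathsf{KZG.Open}$, which accepts by completeness of KZG; this matches the completeness clause of \defref{def: Proof of Download}.

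Second, for the lucky-number sub-claim, I would argue that, fixing all other choices, the builder's success probability is controlled entirely through $D(|\mathrm{Proposer}_j - \mathrm{luck}_i|)$. Since $D$ is monotone decreasing, the right-hand side of the inequality $\mathrm{Hash}(\text{\em batch}_i.\mathrm{head}(r)) < D(|\mathrm{Proposer}_j - \mathrm{luck}_i|)$ is maximized precisely when $|\mathrm{Proposer}_j - \mathrm{luck}_i|$ is minimized, i.e. when $\bdv$ selects the proposal from $\text{\em block}_{i-1}$ whose proposer ID is closest to $\mathrm{luck}_i$. Modeling $\mathrm{Hash}$ as a random oracle, the probability that a uniformly chosen $r$ satisfies the inequality is proportional to $D(\cdot)$, so the per-trial and per-unit-time success probabilities are jointly maximized by the stated choice.

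Third, for proof-of-existence completeness, I would invoke the locally stored tuple $T_i = (j, \mathrm{TX}_i^{(j)}, \pi_i^{(j)})$. Upon receiving a challenge $c$ forwarded from $\adv\cdv$, the builder computes $v_j = H_1(\mathrm{TX}_i^{(j)})$ and $r = H_2(c, \mathrm{TX}_i^{(j)})$, runs $\pi_v = \mathsf{SNARK.Prove}((c, v_j, r), \mathrm{TX}_i^{(j)})$, and returns the bundle $(j, v_j, r, \pi_v, \pi_i^{(j)})$. The contract's check $\mathsf{SNARK.Verify}((c, v_j, r), \pi_v) = 1$ passes by zk-SNARK completeness, and $\mathsf{KZG.VerifyEval}(\mathrm{hidden\_state}_{i+2}, j, v_j, \pi_i^{(j)}) = 1$ passes by KZG evaluation completeness, since $\pi_i^{(j)}$ was produced by $\mathsf{KZG.Eval}$ on exactly the polynomial whose commitment appears as $\mathrm{hidden\_state}_{i+2}$ on L1.

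The main obstacle, and the only nontrivial step, will be making the timing argument precise: we must verify that interpolation together with commitment in the first part, and proof generation together with the on-chain message exchange in the third part, all fit inside the relevant windows (one batch period for building, and the predefined response window of $\adv\cdv$). I expect to handle this by bounding the cost of each primitive (interpolation in $O(k \log^2 k)$, $\mathsf{KZG.Commit}$ and $\mathsf{KZG.Eval}$ linear in $k$, and $\mathsf{SNARK.Prove}$ polynomial in $|\mathrm{TX}_i^{(j)}|$) and appealing to the system's parameterization, which already sizes these windows with these primitives in mind.
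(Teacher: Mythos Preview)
Your proposal is correct and follows essentially the same three-part decomposition and the same primitive-level appeals (KZG correctness for the hidden state, monotonicity of $D$ for the lucky-number bound, and KZG plus zk-SNARK completeness for the challenge response) as the paper's own proof. Your added remarks on Lagrange-interpolation uniqueness, the random-oracle view of $\mathrm{Hash}$, and the explicit timing bounds go slightly beyond what the paper writes, but they do not change the route; the paper simply omits the timing analysis rather than handling it differently.
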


\begin{proof}
We first argue the completeness of proof of download. When $\bdv$ honestly downloads $\mathrm{TX}_{i - 2}$, it can compute $\phi_{i - 2}(x)$ and derive $\mathrm{hidden\_state_{i}}=\mathsf{KZG.Commit}(\phi_{i - 2}(x))$. When other honest builders receive the newly published $i$-th batch, they can compute the $\phi_{i-2}(x)$ based on $\mathrm{TX}_{i - 2}$ in their local storage and verify $\mathrm{hidden\_state_{i}}$ with $\mathsf{KZG.Open}(\mathrm{hidden\_state_{i}}, \phi_{i-2}(x)) \overset{?}{=} 1$. Since the $\mathrm{hidden\_state_{i}}$ is computed honestly by $\bdv$, the algorithm outputs $1$. This means others can accept the legal hidden state. The completeness of proof of download follows directly from the correctness of the KZG polynomial commitment scheme. 

Next, we discuss the probability of finding $r$ during the batch generation process, under the condition $\mathrm{Hash}(\text{\em batch}_i.\mathrm{head}(r)) < D(|\mathrm{Proposer}_j - \mathrm{luck}_{i}|)$. After $\bdv$ computing the luck number $\mathrm{luck}_{i}$, it will select a suitable proposal that minimizes $|\mathrm{Proposer}_j - \mathrm{luck}_{i}|$. Since $D$ is a monotone decreasing function, $D(|\mathrm{Proposer}_j - \mathrm{luck}_{i}|)$ will be maximized, thus indicating the broadest valid range. Consequently, an honest builder can have the highest chance of finding an appropriate $r$ within the batch generation period.

Finally, we argue the response to data availability challenges. When $\bdv$ is challenged with $c$ and the $i$-th batch, it can retrieve $(j, \mathrm{TX}_{i}^{(j)}, \pi_i^{(j)})$ from the local storage, calculate $v_j = H_1(\mathrm{TX}_i^{(j)})$ and $r = H_2(c, \mathrm{TX}_{i}^{(j)})$, and subsequently generate $\pi_v$. The tuple $(j, v_j, \pi_i^{(j)}, r, \pi_v)$ will be a valid response since $\mathsf{KZG.VerifyEval}$ $(\mathrm{hidden\_state}_{i+2}, j, v_j, \pi_{i}^{(j)}) = 1$ and $\mathsf{Verify}(\mathsf{pk}, (c, v_j, r), \pi_v) = 1$, following the correctness of the KZG polynomial commitment scheme and completeness of zk-SNARK, respectively. 
\end{proof}

Second, we show that if the malicious builder $\bdv$ does not download previous transactions $\mathrm{TX}_{i - 2}$, it cannot compute a valid hidden state $\mathrm{hidden\_state_{i}}$ (proof of download soundness in \defref{def: Proof of Download}).

\begin{theorem}[Proof of Download Soundness]
\label{the: Hidden State Soundness}
Consider a malicious builder $\bdv$ that forges a proof of download $\mathrm{hidden\_state_{i}}'$ without downloading $\mathrm{TX}_{i - 2}$. The probability of $\mathrm{hidden\_state_{i}}'$ being accepted by other builders is negligible.
\end{theorem}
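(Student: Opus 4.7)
The plan is to reduce the task of forging $\mathrm{hidden\_state_{i}}'$ to guessing a uniformly random element of $\GG$, so that the success probability is bounded by $1/|\GG| = \mathrm{negl}(\lambda)$. First I would unpack what acceptance actually requires: an honest verifier holds $\mathrm{TX}_{i-2}$ locally, reconstructs $\phi_{i-2}(x)$ by computing $v_j = H_1(\mathrm{TX}_{i-2}^{(j)})$ for $j\in[0,k)$ and Lagrange-interpolating through the points $(j,v_j)$, and then runs $\mathsf{KZG.Open}(\mathrm{hidden\_state_{i}}',\phi_{i-2}(x))$. By the specification of $\mathsf{Open}$, this returns $1$ iff $\mathrm{hidden\_state_{i}}' = \mathsf{KZG.Commit}(\phi_{i-2}(x)) = g^{\phi_{i-2}(\alpha)}$, so $\bdv$'s task reduces to producing that one specific group element.

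Next I would model $H_1$ as a random oracle. Because $\bdv$ never downloads $\mathrm{TX}_{i-2}$, it never queries $H_1$ on any $\mathrm{TX}_{i-2}^{(j)}$; in its view the $k$ tags $v_0,\ldots,v_{k-1}$ are therefore jointly uniform in $\mathbb{Z}_p^k$. Since Lagrange interpolation at $k$ fixed, distinct abscissae is a bijection between $\mathbb{Z}_p^k$ and polynomials of degree at most $k-1$, the coefficient vector of $\phi_{i-2}$ is uniform in $\mathbb{Z}_p^k$, and hence $\phi_{i-2}(\alpha)$ is uniform in $\mathbb{Z}_p$ for the (secret) trapdoor $\alpha$ fixed at setup. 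It follows that $g^{\phi_{i-2}(\alpha)}$ is uniform in $\GG$ from $\bdv$'s viewpoint, so any candidate $\bdv$ outputs matches the target with probability at most $1/|\GG|$, which is negligible in $\lambda$ provided $|\GG|$ is exponential in the security parameter.

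The step I expect to be the main obstacle is ruling out side-channel leakage from auxiliary information. $\bdv$ also sees the public SRS $(g,g^{\alpha},\ldots,g^{\alpha^{D}})$ and every earlier hidden state $\mathrm{hidden\_state_{i'}}=g^{\phi_{i'-2}(\alpha)}$ for $i'<i$, together with any partial transcripts of the L2 protocol. The SRS is sampled independently of $\mathrm{TX}_{i-2}$, and each earlier hidden state is a commitment to a polynomial derived from a different set of transactions whose random-oracle outputs are, by the distinctness of their inputs to $H_1$, independent of $v_0,\ldots,v_{k-1}$. Formalising this independence -- and in particular ensuring that transaction parts across different batches never collide as $H_1$ inputs (e.g.\ by prepending a batch-index domain separator) -- is the crux of the argument; once established, a standard hybrid shows that the view of $\bdv$ is statistically independent of $g^{\phi_{i-2}(\alpha)}$, so the guessing bound above is the best $\bdv$ can achieve and the theorem follows.
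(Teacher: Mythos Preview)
Your proposal is correct and shares the paper's high-level outline---the adversary must hit the single group element $g^{\phi_{i-2}(\alpha)}$, and without $\mathrm{TX}_{i-2}$ this amounts to a blind guess---but your execution is considerably more careful. The paper's argument is brief: it asserts that $\bdv$ cannot compute $\phi_{i-2}$ or its commitment, notes that the output space of $\mathsf{KZG.Commit}$ is large so $\mathrm{hidden\_state_{i}}' \neq \mathrm{hidden\_state_{i}}$ with overwhelming probability, and then invokes the \emph{polynomial binding} property of KZG to conclude that $\mathsf{Open}$ returns $0$. You instead model $H_1$ as a random oracle, trace the uniformity of the $v_j$'s through Lagrange interpolation to show $g^{\phi_{i-2}(\alpha)}$ is uniform in $\GG$ from $\bdv$'s view, and explicitly address leakage from the SRS and earlier hidden states. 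Your observation that $\mathsf{Open}$ is literally an equality check against $\mathsf{Commit}(\phi_{i-2})$---so binding is not really the operative property---is sharper than the paper's appeal to binding, and your treatment of auxiliary information and domain separation covers a point the paper omits entirely. What the paper's version buys is brevity; what yours buys is an explicit probability bound and a cleaner identification of where the hardness actually comes from (unpredictability of $H_1$ outputs rather than any computational property of KZG).
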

\begin{proof}
Suppose $\bdv$ does not possess $\mathrm{TX}_{i - 2}$ when constructing $\text{\em batch}_i$. Without $\mathrm{TX}_{i - 2}$, $\bdv$ cannot compute $\phi_{i - 2}(x)$ and $\mathsf{KZG.Commit}\\(\phi_{i - 2}(x))$. Therefore, $\bdv$ has to forge a $\mathrm{hidden\_state_{i}}'$ for the hidden state field. With all but negligible probability, $\mathrm{hidden\_state_{i}}' \neq \mathrm{hidden\_state_{i}}$ holds when the output space of $\mathsf{KZG.Commit}$ is sufficiently large, where $\mathrm{hidden\_state_{i}} = \mathsf{KZG.Commit}(\phi_{i - 2}(x))$. Therefore, due to the polynomial binding property of the KZG polynomial commitment, $\mathsf{KZG.Open}(\mathrm{hidden\_state_{i}}', \phi_{i - 2}(x))$ will output $0$. This implies that a forged proof of download will not pass the verification of other builders. Hence, the probability of $\bdv$ successfully forging a valid proof of download is negligible.
\end{proof}

Thirdly, we demonstrate that if the malicious builder $\bdv$ deletes $\mathrm{TX}_{i}^{(j)}$, the arbiter contract will transfer its deposit to the challenger as a penalty upon receiving a challenge $c$ for the $i$-th batch (proof of existence soundness in \defref{def: Proof of Existence}).

\begin{theorem}[Proof of Existence Soundness]
\label{the: Response Unforgeability}
Consider a malicious builder $\bdv$ who deletes all $\mathrm{TX}_{i}^{(j)}$'s for $\mathrm{TX}_{i}$. Upon receiving a challenge $(i, c)$, $\bdv$ cannot generate a valid response to pass the verification of the arbiter contract.
\end{theorem}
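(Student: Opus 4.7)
The plan is to reduce this soundness statement to three standard ingredients already available in the paper: the evaluation binding of the KZG commitment, the knowledge soundness of the zk-SNARK, and the preimage resistance of $H_1$ modeled as a random oracle. The intuition is that the KZG check pins down $v_j$ to a single value, which in turn forces the SNARK relation to demand a preimage of $v_j$ that a builder who has discarded $\mathrm{TX}_i^{(j)}$ cannot supply.

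First, I would fix the shape of any alleged forgery. A valid response $(j, v_j', \pi_i^{(j)'}, r', \pi_v')$ must simultaneously satisfy $\mathsf{KZG.VerifyEval}(\mathrm{hidden\_state}_{i+2}, j, v_j', \pi_i^{(j)'}) = 1$ and $\mathsf{SNARK.Verify}((c, v_j', r'), \pi_v') = 1$. Since $\mathrm{hidden\_state}_{i+2}$ was already recorded on L1 when $\text{\em batch}_{i+2}$ was accepted, $\bdv$ cannot tamper with it. Invoking the evaluation binding of KZG (analogously to the argument in \theref{the: Hidden State Soundness}), I can conclude that $v_j' = v_j = H_1(\mathrm{TX}_i^{(j)})$ except with negligible probability; otherwise two distinct openings at the same index would exist.

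Next, I would invoke knowledge soundness of the zk-SNARK to extract from any accepting $\pi_v'$ a witness $\widetilde{\mathrm{TX}}_i^{(j)}$ satisfying $H_1(\widetilde{\mathrm{TX}}_i^{(j)}) = v_j$ and $H_2(c, \widetilde{\mathrm{TX}}_i^{(j)}) = r'$. Since $\bdv$ has deleted $\mathrm{TX}_i^{(j)}$, the extractor effectively produces a preimage of $v_j$ under $H_1$, which occurs with only negligible probability by preimage resistance. Combining this negligible probability with the negligible soundness error of the zk-SNARK then yields the desired overall negligible bound on $\bdv$'s success.

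The main obstacle, and the step I would treat most carefully, is ruling out a precomputation/replay attack in which $\bdv$ stores, for each part $j$, a tuple $(v_j, r^\ast, \pi_v^\ast)$ precomputed for some guessed challenge $c^\ast$ and then simply replays it. This is precisely why $r = H_2(c, \mathrm{TX}_i^{(j)})$ is bound inside the SNARK statement alongside the current $c$: because $c$ is sampled uniformly by the challenger \emph{after} $\bdv$ has fixed its stored state, the probability that any precomputed $c^\ast$ equals $c$ is negligible. Any fresh proof $\pi_v'$ for the new $c$ must therefore be generated without access to $\mathrm{TX}_i^{(j)}$, and the preceding extraction argument rules this out. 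Getting the order of quantifiers correct between the adversary's committed state and the challenger's fresh randomness is the only delicate point; everything else is a direct composition of the three assumptions.
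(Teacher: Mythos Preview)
Your proposal is correct and follows essentially the same two-step skeleton as the paper: first use KZG evaluation binding to force $v_j' = v_j$, then use zk-SNARK soundness to rule out a forged $\pi_v$. You unpack the second step more carefully than the paper does---invoking \emph{knowledge} soundness plus preimage resistance of $H_1$, and explicitly treating the precomputation/replay case via the fresh challenge $c$---whereas the paper simply appeals to ``the soundness of zk-SNARK'' and leaves the role of $c$ implicit; but the underlying argument is the same.
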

\begin{proof}
Suppose $\bdv$ does not possess any $\mathrm{TX}_{i}^{(j)}$ for all $j$'s when receiving a challenge $(i, c)$. $\bdv$ needs to construct $v'_j$ and $\pi_i'^{(j)}$ for some $j$ as a part of the response. Let $v_j$ and $\pi_i^{(j)}$ be the correct values that satisfy $\mathsf{KZG.VerifyEval}(\mathrm{hidden\_state}_{i+2}, j, v_j, \pi_i^{(j)}) = 1$. If $v'_j = v_j$ and $\pi_i'^{(j)} = \pi_i^{(j)}$, without $\mathrm{TX}_{i}^{(j)}$, $\bdv$ will be unable to provide a valid proof $\pi_v$ to show $v_j = H_1(\mathrm{TX}_{i}^{(j)})$ and $r = H_2(c, \mathrm{TX}_{i}^{(j)})$, according to the soundness of of zk-SNARK. If $v'_j \neq v_j$ or $\pi_i'^{(j)} \neq \pi_i^{(j)}$, $\mathsf{KZG.VerifyEval}(\mathrm{hidden\_state}_{i+2}, j, v'_j, \pi_i'^{(j)}) = 0$ following the evaluation binding property of the KZG polynomial commitment. Consequently, with all but negligible probability, the forged response cannot pass the verification of the arbiter contract. The deposit will be transferred to the challenger.
\end{proof}

Lastly, we show that when a malicious builder $\bdv$ colludes with a specific proposer, the probability of generating a new batch within a period decreases.

\begin{theorem}[Collusion Resistance]
\label{the: Adjustable  Difficulty}
Consider a malicious builder $\bdv$ that colludes with $\mathrm{Proposer}_{k}$. $\bdv$ selects the proposal from $\mathrm{Proposer}_{k}$ when building the $i$-th batch. The probability of finding an $r$ that satisfies $\mathrm{Hash}(\text{\em batch}_i.\mathrm{head}(r)) < D(|\mathrm{Proposer}_k - \mathrm{luck}_{i}|)$ is lower than the probability of $\mathrm{Hash}(\text{\em batch}_i.\mathrm{head}(r)) < D(|\mathrm{Proposer}_j - \mathrm{luck}_{i}|)$, where $\mathrm{Proposer}_{j}$ represents the proposer whose proposal is selected by honest builders.
\end{theorem}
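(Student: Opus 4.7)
The plan is to reduce the probabilistic comparison between the colluding and the honest builder to a deterministic comparison of the difficulty thresholds they are required to beat, and then translate that threshold comparison into a probability statement via a random-oracle style argument on $\mathrm{Hash}$. Concretely, I would fix the transcript up to the publication of $\text{\em block}_{i-1}$ and treat $\mathrm{luck}_i = H_4(\text{\em block}_{i-1}.\mathrm{head})$ as given; all subsequent randomness of interest lies in the choice of the nonce $r$ that the builder appends to $\text{\em batch}_i.\mathrm{head}$.

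First I would argue, using the period-separation guarantee of \secref{sec: Proof of Luck and Period Separation}, that the malicious builder $\bdv$ must commit to a collusion partner $\mathrm{Proposer}_k$ during the proposing period, i.e.\ strictly before $\mathrm{luck}_i$ is fixed, since proposals submitted after $\text{\em block}_{j+d}$ are rejected by the verifying smart contract. Consequently, $\mathrm{Proposer}_k$ is independent of $\mathrm{luck}_i$ in the distributional sense. By contrast, an honest builder selects, from the proposals recorded during the valid proposing window, the index $j$ that minimizes $|\mathrm{Proposer}_j - \mathrm{luck}_i|$. By definition of the minimum, we therefore have $|\mathrm{Proposer}_j - \mathrm{luck}_i| \leq |\mathrm{Proposer}_k - \mathrm{luck}_i|$, with equality only in the degenerate case $k = j$ (which is not the colluding scenario assumed in the theorem, so the inequality is strict). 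Monotonicity of $D$ then gives $D(|\mathrm{Proposer}_j - \mathrm{luck}_i|) \geq D(|\mathrm{Proposer}_k - \mathrm{luck}_i|)$, and strict in the non-degenerate case.

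Next I would translate this threshold inequality into a probability inequality. Modeling $\mathrm{Hash}$ as a random oracle with $L$-bit output, for any fixed threshold $\tau$ the probability over a uniformly chosen nonce $r$ that $\mathrm{Hash}(\text{\em batch}_i.\mathrm{head}(r)) < \tau$ equals $\tau / 2^L$, and more generally, after $N$ distinct trials within the batch generation window, the success probability is $1 - (1 - \tau/2^L)^N$, which is strictly monotone in $\tau$. Applying this with $\tau_k = D(|\mathrm{Proposer}_k - \mathrm{luck}_i|)$ and $\tau_j = D(|\mathrm{Proposer}_j - \mathrm{luck}_i|)$ and combining with the previous paragraph yields the desired inequality between the two success probabilities. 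Taking the outer expectation over $\mathrm{luck}_i$ preserves the (strict) inequality.

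The main obstacle I expect is not the algebra but the modeling step: one must justify that the malicious builder cannot, with non-negligible probability, have pre-committed to a $\mathrm{Proposer}_k$ that happens to be as close to $\mathrm{luck}_i$ as the honest-optimal $\mathrm{Proposer}_j$. This is where the random-oracle heuristic on $H_4$ and the period-separation assumption do the real work, and a careful statement should condition on the event that at least one honest proposer whose identifier is closer to $\mathrm{luck}_i$ than $\mathrm{Proposer}_k$ exists, which holds with overwhelming probability whenever the proposer set is not sparsely concentrated around $\mathrm{Proposer}_k$. The remainder of the argument follows directly from the monotonicity of $D$ and the uniform behavior of $\mathrm{Hash}$.
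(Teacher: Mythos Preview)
Your proposal is correct and follows essentially the same approach as the paper: use period separation to argue the colluding proposer index is fixed before $\mathrm{luck}_i$, invoke the honest builder's minimizing choice to get $|\mathrm{Proposer}_j-\mathrm{luck}_i|<|\mathrm{Proposer}_k-\mathrm{luck}_i|$ (with overwhelming probability over sufficiently many proposers), apply monotonicity of $D$, and conclude the probability inequality. You add welcome rigor the paper omits---the explicit random-oracle treatment of $\mathrm{Hash}$ and the $1-(1-\tau/2^L)^N$ monotonicity step---but the skeleton is identical.
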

\begin{proof}
When generating $\text{\em batch}_{i}$, all valid proposals reside in $\text{\em block}_{i - 1}$. As $\mathrm{luck}_i = H_4(\text{\em block}_{i - 1}.\mathrm{header})$, $\mathrm{Proposer}_k$ cannot propose a valid proposal for $\text{\em batch}_{i}$ after $\mathrm{luck}_i$ is settle. For honest builders, proposals are chosen to minimize $|\mathrm{Proposer}_j - \mathrm{luck}_{i}|$. Thus, given a large number of proposals, we can state that $|\mathrm{Proposer}_j - \mathrm{luck}_{i}| < |\mathrm{Proposer}_k - \mathrm{luck}_{i}|$ holds with overwhelming probability. Since $D$ is a monotone decreasing function, $D(|\mathrm{Proposer}_j - \mathrm{luck}_{i}|)$ is larger than $D(|\mathrm{Proposer}_k - \mathrm{luck}_{i}|)$. Consequently, the probability of an $r$ such that $\mathrm{Hash}(\text{\em batch}_i.\mathrm{head}(r)) < D(|\mathrm{Proposer}_k - \mathrm{luck}_{i}|)$ is lower than the probability of satisfying $\mathrm{Hash}(\text{\em batch}_i.\mathrm{head}(r)) < D(|\mathrm{Proposer}_j - \mathrm{luck}_{i}|)$.
\end{proof}

\begin{figure}
	\centering
	\begin{minipage}{0.49\linewidth}
		\centering
		\includegraphics[width=1\textwidth]{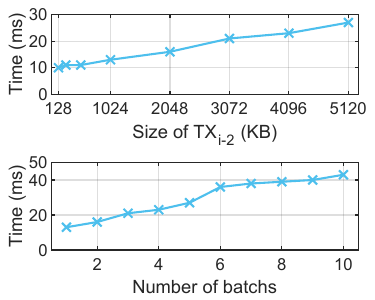}
		\caption{Hidden state generation time.}
		\label{hs time}     
	\end{minipage}
	\begin{minipage}{0.49\linewidth}
		\centering
		\includegraphics[width=1\textwidth]{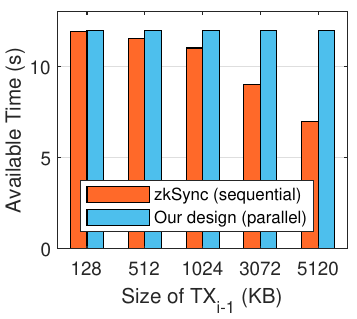}       
		\caption{Available time for building batches.}
		\label{ava time}     
	\end{minipage}
\end{figure}

\section{Evaluation}
\label{sec: Evaluation}
We implement a prototype of our proposed system to demonstrate its feasibility and evaluate the performance of our techniques\footnote{Our system prototype: https://github.com/i1Il1I1ll1ill1/Rollup.}. Our prototype runs on the Ethereum Goerli testnet. For the arbiter contract, we utilize the Circom library provided by iden3, which is implemented using the Rust language and employs the BN128 elliptic curve for pairing and group operations. Snark.js is used for the setup and generation of zero-knowledge proofs. We use Poseidon as the hash function for $H_1$, $H_2$, $H_3$, and $H_4$, which is a zk-friendly scheme based on sponge construction. Specifically, we use $t=12$ Poseidon-128 with $R_F=8$ and $R_P=22$. All experiments are conducted on a laptop computer equipped with an Intel i5-9500F processor, $16$GB RAM, $1$TB SSD storage, and $1$MB/s bandwidth. 

\subsection{Batch Generation}
\noindent \textbf{Hidden state generation time}.
We first analyze the cost of the hidden state. Specifically, we compare the time required for generating the hidden state in two settings: 1) $\mathrm{hidden\_state}_i = H(\mathrm{TX}_{i-2})$ (using one batch), and 2) $\mathrm{hidden\_state}_i = H(\mathrm{TX}_{i-2}, \mathrm{TX}_{k_1}, \cdots, \mathrm{TX}_{k_n})$ (using multiple batches). The results are depicted in \figref{hs time}.

In the one-batch case (top figure), an L2 node can efficiently compute the $\mathrm{hidden\_state}_i$ within $13$ms with a $\mathrm{TX}_{i-2}$ of $1$MB size. Moreover, the generation time does not increase significantly even with larger $\mathrm{TX}_{i-2}$ sizes. When considering multiple batches (bottom figure), the generation time does increase as more previous batches are involved. For simplicity, we fix the size of $\mathrm{TX}_{i-2}$ and $\mathrm{TX}_{k_j}$ to $1$MB. Our design is capable of efficiently handling $10$ batches within $50$ms. To minimize overhead, we have adopted the one-batch design in our system, ensuring that the hidden state only incurs a minor impact.

\noindent \textbf{Available time for building the next batch}.
In our system, we enable the parallel processing of downloading $\mathrm{TX}_{i-1}$ and building $\text{\em batch}_i$ by setting $\mathrm{hidden\_state}_i = H(\mathrm{TX}_{i-2})$. In contrast, existing L2 solutions like zkSync require L2 nodes to download $\mathrm{TX}_{i-1}$ before constructing $\text{\em batch}_i$. We compare the available building time between the two approaches in \figref{ava time}. As expected, the parallel processing in our design significantly increases the available building time. When the batch generation period is $12$s, the available time remains constant at around $12$s regardless of the size of $\mathrm{TX}_{i-1}$. This is because synchronizing the states from L1 is a highly efficient process, and it does not significantly impact the available time. On the other hand, in traditional sequential processing schemes like zkSync, the available time decreases noticeably as the size of $\mathrm{TX}_{i-1}$ increases. This demonstrates that our hidden state design ensures downloading and alleviates the bandwidth burden at the same time.

\begin{table}[]
\centering
\caption{Probability of detecting malicious builder by data availability challenge based on the different number of challenges and proportion of deleted historical data.}
\renewcommand{\arraystretch}{1.5}
\label{DA challenge}
\begin{tabular}{ccccccc}
\cline{2-7} 
\multirow{2}{*}{} &
\multicolumn{6}{c}{\textbf{Proportion of deleted historical data}} \\ \cline{2-7} 
 &
  \multicolumn{1}{c|}{5\%} &
  \multicolumn{1}{c|}{10\%} &
  \multicolumn{1}{c|}{15\%} &
  \multicolumn{1}{c|}{20\%} &
  \multicolumn{1}{c|}{25\%} &
  30\% \\ \hline\hline
\multicolumn{1}{c||}{$s=6$}  & \multicolumn{1}{c|}{23.2\%} & \multicolumn{1}{c|}{41.3\%} & \multicolumn{1}{c|}{57.1\%} & \multicolumn{1}{c|}{68.6\%} & \multicolumn{1}{c|}{77.6\%} & 83.7\% \\ \hline
\multicolumn{1}{c||}{$s=10$} & \multicolumn{1}{c|}{37.7\%} & \multicolumn{1}{c|}{67.6\%} & \multicolumn{1}{c|}{81.6\%} & \multicolumn{1}{c|}{88.2\%} & \multicolumn{1}{c|}{94.8\%} & 97.5\% \\ \hline
\multicolumn{1}{c||}{$s=30$} & \multicolumn{1}{c|}{77.2\%} & \multicolumn{1}{c|}{96.1\%} & \multicolumn{1}{c|}{98.9\%} & \multicolumn{1}{c|}{99.9\%} & \multicolumn{1}{c|}{100\%}  & 100\%  \\ \hline
\multicolumn{1}{c||}{$s=50$} &
  \multicolumn{1}{c|}{94.4\%} &
  \multicolumn{1}{c|}{99.5\%} &
  \multicolumn{1}{c|}{100\%} &
  \multicolumn{1}{c|}{100\%} &
  \multicolumn{1}{c|}{100\%} &
  100\% \\ \hline
\end{tabular}%
\end{table}

\begin{figure}
	\centering
    \includegraphics[width=1\columnwidth]{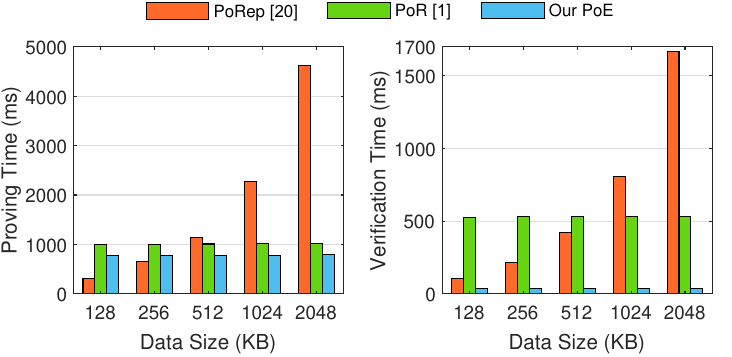}
    \caption{Efficiency comparison of our proof of existence (PoE) with other storage proofs.}
	\label{poe_time}
\end{figure}

\begin{figure*}
	\centering
	\begin{minipage}{0.245\linewidth}
		\centering
		\includegraphics[width=1\textwidth]{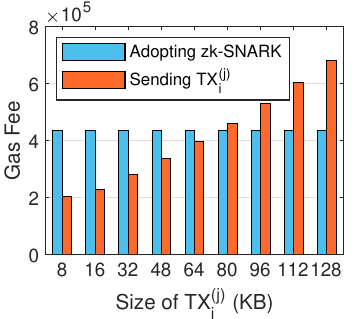}
		\caption{Gas fee.}
		\label{gas fee}     
	\end{minipage}
	\begin{minipage}{0.245\linewidth}
		\centering
		\includegraphics[width=1\textwidth]{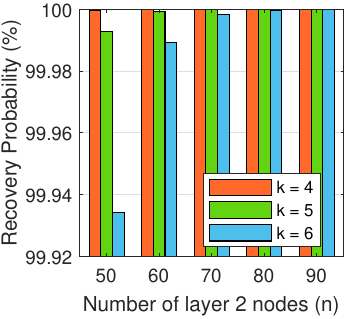}
		\caption{No failure.}
		\label{recovery rate no failure}     
	\end{minipage}
    \begin{minipage}{0.245\linewidth}
		\centering
		\includegraphics[width=1\textwidth]{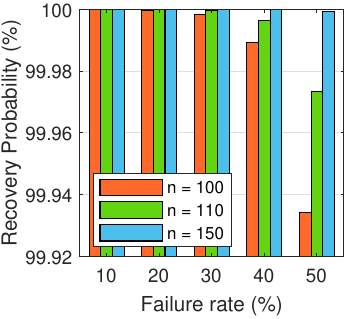}
		\caption{$k = 5$.}
		\label{recovery rate on n}     
	\end{minipage}
	\begin{minipage}{0.245\linewidth}
		\centering
		\includegraphics[width=1\textwidth]{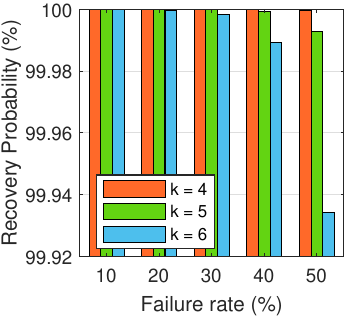}
		\caption{$n = 100$.}
		\label{recovery rate on k}     
	\end{minipage}
\end{figure*}



\subsection{Data Availability Challenge}
We extensively evaluate the performance of the data availability challenges. The experiment shows that the data availability challenge can effectively identify malicious builders who delete historical transactions.

\noindent \textbf{Probability of identifying malicious builders}.
We evaluate the performance of our data availability challenge by analyzing the probability of detecting malicious builders who delete parts of historical data. Assuming each builder stores 10,000 batches, \tabref{DA challenge} illustrates the probability of detection based on the number of challenges ($s$) and the proportion of deletion. The results indicate that our data availability challenge can effectively identify malicious builders even when they delete only a small portion of historical data. For instance, if a builder deletes $5\%$ of the historical data, there is a $94.4\%$ chance of detection with $50$ times of challenges. It is important to note that $s = 50$ is a conventional value in real-world scenarios where builders are less decentralized. Additionally, deleting $5\%$ of the historical data does not alleviate the storage burden on builders. As more data is deleted, the probability of detection increases significantly. For example, when $30\%$ data is deleted, there is a $97.5\%$ chance of being caught with only $10$ challenges.



\noindent \textbf{Efficiency of data availability challenge}.
Since builders need to generate a zk-SNARK proof for data availability challenges, we further compare the proving and verification time of our proof of existence with other storage proofs such as PoR \cite{anthoine2021dynamic} and PoRep \cite{fisch2018poreps} in \figref{poe_time}, considering different sizes of challenged historical data (i.e., different sizes of $\mathrm{TX}_i^{(j)}$). Remarkably, regardless of the data size, the proving time of our scheme remains approximately $780$ms, while the verification time is around $35$ms. Compared with other storage proofs, our scheme is the most efficient one, especially for large data. These results indicate our proof of efficiency is more suitable for real-world applications of L2. Moreover, the results show that the time required in our scheme does not increase significantly with the data size (unlike the PoRep, which increases linearly with the data size). The above characteristics make our design highly suitable for Ethereum scaling solutions.


\noindent \textbf{Gas fees}.
In the design of the data availability challenges, we highlight that directly responding with $\mathrm{TX}_i^{(j)}$ is suitable for low transaction rate cases, while adopting a zk-SNARK is more suitable for scaling solutions. To compare the gas fees of these two designs, we consider the gas fee cost of each approach. To deploy the arbiter contract, the straightforward approach incurs a gas fee of 1,769,307, while the zk-SNARK solution costs 3,545,788 gas. Although deploying zk-SNARK is more expensive, it is considered affordable in real-world systems as it is a one-time cost.

For each data availability challenge, the gas fee is shown in \figref{gas fee}. As expected, the fee for the zk-SNARK solution remains consistent at around 434,900 gas, regardless of the size of $\mathrm{TX}_i^{(j)}$. On the other hand, the gas fee for responding with $\mathrm{TX}_i^{(j)}$ directly increases linearly with the size of $\mathrm{TX}_i^{(j)}$. When the size of $\mathrm{TX}_i^{(j)}$ exceeds $80$KB, the zk-SNARK design becomes more cost-effective and is better suited for scaling solutions.

\subsection{Partial storage}
In our partial storage design, it is important to ensure a high probability of data recovery in the L2 network, even with some node failures. Let $n$ denote the number of builders in the L2 network. In our partial storage design, we divide each batch into $k$ parts. We present the probability of data recovery under different values of $n$ and $k$ without considering failures, as shown in \figref{recovery rate no failure}. As expected, the recovery probability increases with $n$ and decreases with $k$ since a larger $n$ and a smaller $k$ indicate a more robust network. The results also show our design is acceptable for real-world scenarios, with the recovery probability remaining above $99.99\%$ when $k = 5$ for a network of $50$ builders.

Moreover, we examine the recovery probability under different failure rates. The results, with a fixed $k = 5$, are depicted in \figref{recovery rate on n}, while those with a fixed $n = 100$ are shown in \figref{recovery rate on k}. We observe that a high recovery probability ($> 99.9\%$) can be maintained even with $50\%$ failure rate. When the failure rate is below $30\%$, the recovery probability almost approaches $100\%$ for all cases. Notably, our experiment also considers partial storage, meaning that each builder only needs to store $1/k$ of the historical data. These results emphasize that our partial storage design ensures both data availability and alleviates the storage burden simultaneously.




\begin{table}[]
\caption{The difficulty ratio of a malicious builder when colluding with different proportions of proposers. INF means the malicious builder can never find a proper $r$.}

\label{PoL}
\renewcommand{\arraystretch}{1.5}
\centering
\resizebox{1\columnwidth}{!}{%
\begin{tabular}{cccccccc}
\cline{2-8}
\multicolumn{1}{l}{}        & \multicolumn{7}{c}{\textbf{Proportion of Colluded Proposers}}                                                                                                                                                                                                                 \\ \cline{2-8} 
\multicolumn{1}{c}{}      & \multicolumn{1}{c|}{$1\%$} & \multicolumn{1}{c|}{$2\%$}                & \multicolumn{1}{c|}{$3\%$}                & \multicolumn{1}{c|}{$5\%$}                & \multicolumn{1}{c|}{$10\%$}               & \multicolumn{1}{c|}{$20\%$}              & $30\%$      \\ \hline\hline
\multicolumn{1}{c||}{$a=1.5$}  & \multicolumn{1}{c|}{INF}   & \multicolumn{1}{c|}{INF}                  & \multicolumn{1}{c|}{$4.1 \times 10^{65}$} & \multicolumn{1}{c|}{$6.9 \times 10^{36}$} & \multicolumn{1}{c|}{$1.5 \times 10^{15}$} & \multicolumn{1}{c|}{$1.8 \times 10^{4}$} & $5.1$       \\ \hline
\multicolumn{1}{c||}{$a=2.5$} & \multicolumn{1}{c|}{INF}   & \multicolumn{1}{c|}{INF}                  & \multicolumn{1}{c|}{$2.0 \times 10^{61}$} & \multicolumn{1}{c|}{$2.7 \times 10^{32}$} & \multicolumn{1}{c|}{$2.7 \times 10^{6}$}  & \multicolumn{1}{c|}{$1.8$}               & $1.0002$    \\ \hline
\multicolumn{1}{c||}{$a=5.5$} & \multicolumn{1}{c|}{INF}   & \multicolumn{1}{c|}{INF}                  & \multicolumn{1}{c|}{$2.9 \times 10^{48}$} & \multicolumn{1}{c|}{$3.4 \times 10^{19}$} & \multicolumn{1}{c|}{$1.0062$}             & \multicolumn{1}{c|}{$\approx 1$}         & $\approx 1$ \\ \hline
\multicolumn{1}{c||}{$a=10.5$}& \multicolumn{1}{c|}{INF}   & \multicolumn{1}{c|}{$2.9 \times 10^{62}$} & \multicolumn{1}{c|}{$3.8 \times 10^{26}$} & \multicolumn{1}{c|}{$1.0069$}             & \multicolumn{1}{c|}{$\approx 1$}          & \multicolumn{1}{c|}{$\approx 1$}         & $\approx 1$ \\ \hline
\end{tabular}
}
\end{table}

\subsection{Against the MEV Attack}
Finally, we evaluate the performance of our system against MEV attacks. We define the monotone decreasing function $D$ described in \secref{sec: Proof of Luck and Period Separation} as $\frac{b \cdot 2^{256}}{1 + e^{10 \abs{\mathrm{Proposer}_j - \mathrm{luck}_{i}} - a}}$, where $a$ and $b$ are predefined values that control the difficulty. Specifically, $a$ determines the expected number of valid proposers. For example, by setting $a = 10.5$, we can ensure that the $10$ closest proposers to $\mathrm{luck}_{i}$ are valid (the expected number is $a - 0.5$). $b$ is related to the probability of a build finding a proper $r$ in one attempt. For instance, if $b = 0.1$, an honest build will have a $10\%$ chance of finding the proper $r$ each time. To compare the difficulty of a malicious build with that of an honest one, we use the difficulty rate between the malicious builder and the honest builder, given by $\frac{D(|\mathrm{Proposer}_{j'} - \mathrm{luck}_{i}|)}{D(|\mathrm{Proposer}_{j} - \mathrm{luck}_{i}|)}$, where $\mathrm{Proposer}_{j}$ and $\mathrm{Proposer}_{j'}$ are proposals chosen by the honest and malicious builder, respectively. In our experiment, we consider $1000$ proposals and the malicious builder colludes with different proportions of proposers. Since $b$ is eliminated in the difficulty rate, we focus on different settings of $a$ as shown in \tabref{PoL}. The term ``INF'' indicates that the valid range is less than $1$, and the probability of finding an $r$ such that $\mathrm{Hash}(\text{\em batch}_i.\mathrm{head}(r)) = 0^{256}$ is negligible.

The difficulty rate increases significantly as the proportion and $a$ decrease. For example, if the builder colludes with $2\% \times 1000 = 20$ proposers and $a = 10.5$, the probability of finding $r$ decreases by $2.9 \times 10^{62}$. It is worth noting that as the builder colludes with more proposers, the difficulty rate increases to nearly $1$. This is expected since when $a$ is large, the probability of a colluded proposal being close to the lucky number is overwhelming, indicating that the proposal is valid. Thus, this observation holds true regardless of the function $D$ we choose. However, in real-world scenarios, we consider such collusion unlikely as the proposer network should be decentralized. Colluding with more than $30\%$ of the proposers would be impractical. Therefore, our designs for proof of luck and period separation effectively reduce collusion between builders and proposers and provide efficient protection against MEV attacks.




\section{Discussion}
\label{sec: Discussion}
We discuss some technical details and concerns in our design.

\noindent \textbf{Sharing the hidden state}. In our design, it is possible for L2 nodes to collaborate and share the latest hidden state with each other, resembling a mining pool-like collaboration. This collaboration is allowed because there are certain L2 nodes that download complete transactions to generate the hidden state. As a result, all historical transactions remain complete within the L2 network.

\noindent \textbf{DDoS attacks in proof of existence}. In the proof of existence, L2 nodes are required to respond promptly to data availability challenges. This could potentially expose L2 nodes to DDoS attacks, where an attacker could repeatedly challenge an L2 node, forcing it to spend high gas fees for the responses. Our design mitigates high gas fees for a single challenge. For multiple challenges, we can employ the batch verification of KZG and require the challenger to provide some ``query fee'' that will be transferred to the L2 node upon successful response to the challenge. 

\noindent \textbf{Separation between proof of download and proof of existence}. The separate designs of proof of download and proof of existence offer several benefits. First, proof of download can solve the lazy validator problem and partially address the historical data storage problem. Second, the proof of download imposes additional costs on malicious L2 nodes. If they choose to delete historical data to reduce storage costs, they will have to incur higher bandwidth costs to re-download in response to data availability challenges. Third, a \textit{deterministic} proof of download can be theoretically inferred from a \textit{deterministic} proof of existence (which involves checking all data). However, a deterministic proof of existence requires at least $O(n)$ time. For the sake of system efficiency, we separate these two designs by employing a probabilistic proof of existence and a deterministic proof of download. Consequently, we ensure better data availability than merely using probabilistic proof of existence.

\noindent \textbf{DAS vs. proof of download.}
DAS in Danksharding \cite{Danksharding} allows L1 nodes to randomly sample data to test the availability and punishes the L2 batch builder that fails in the sampling test. With the help of erasure coding and KZG commitment, users can rebuild the original data even with some failures. We summarize the difference of our proof of download as follows. First, L1 nodes adopt DAS to check blob data availability, which incurs the lazy validator problem. In our scheme, the L2 batch builder must propagate the newly mined batch, and other L2 nodes must verify data availability (download the previous batch and generate a valid proof of download) before generating the latest batch. Consequently, L2 nodes cannot benefit (generate a new batch) if they are too lazy to validate. Second, as the blobs will be deleted after a period, it is critical to ensure L2 nodes download transactions promptly. DAS cannot address this issue, while our proof of download solves this problem by restricting their ability to generate new batches without previous transactions.

\noindent \textbf{PBS vs. role separation.}
PBS in Danksharding separates L1 nodes into different roles with a proposer/builder separation: less decentralized builders with high bandwidth to build and broadcast the body of batches and more decentralized proposers with low bandwidth to build the header of batches based on the selected body. As PBS in Danksharding works on L1, the cost of L2 is not affected. But our role separation works on L2, which can significantly reduce the hardware requirements of L2 proposers. Besides, PBS in Danksharding cannot fully avoid MEV attacks when builders collude with proposers, while our scheme can prevent MEV attacks even considering collusion.

\section{Conclusion}
Data availability and decentralization are two critical requirements in the design of L2 solutions. This paper introduces novel techniques, namely proof of download, proof of existence, and role separation, tailored for L2 networks. Proof of download and proof of existence ensure data availability by requiring L2 nodes to download the latest batch and punishing nodes that delete historical transactions. Role separation reduces the hardware requirement of joining L2 networks, resulting in a more decentralized system capable of mitigating MEV attacks. Experimental results show our system can ensure security and efficiency at the same time.

\section{Acknowledgments}
This research has received partial support from HK RGC GRF under Grants PolyU 15216721, 15207522, 15202123, NSFC Youth 62302418, and a donation from Huawei P0048565.

\bibliographystyle{IEEEtranS}
\bibliography{Layer2_NDSS}



\end{document}